\definecolor{mylinkcolor}{RGB}{0,0,140}
\newcommand{\promo}{\textsc{Promotion}\xspace}
\newcommand{\agree}{\textsc{Agreement}\xspace}
\newcommand{\disagree}{\textsc{Disagreement}\xspace}
\newcommand{\subsetsum}{\textsc{Subset Sum}\xspace}
\newcommand{\sfwork}{\textsc{SFWork}\xspace}
\newcommand{\allstate}{\textsc{Allstate}\xspace}
\newcommand{\yoochoose}{\textsc{Yoochoose}\xspace}
\newtheorem{theorem}{Theorem}
\newtheorem{proposition}{Proposition}
\newtheorem{corollary}{Corollary}
\newtheorem{lemma}{Lemma}
\theoremstyle{definition}
\newtheorem{definition}{Definition}
\newcommand{\hide}[1]{}
\newcommand{\xhdr}[1]{\vspace{0.0mm}\noindent{{\bf #1.}}\hspace{0.5mm}}
\DeclareMathOperator*{\argmax}{arg\,max}
\begin{document}

\twocolumn[
\icmltitle{Choice Set Optimization Under Discrete Choice Models of Group Decisions}

\icmltitlerunning{Choice Set Optimization Under Discrete Choice Models of Group Decisions}



\icmlsetsymbol{equal}{*}

\begin{icmlauthorlist}
\icmlauthor{Kiran Tomlinson}{cornell}
\icmlauthor{Austin R.~Benson}{cornell}
\end{icmlauthorlist}

\icmlaffiliation{cornell}{Department of Computer Science, Cornell University, Ithaca, New York, USA}

\icmlcorrespondingauthor{Kiran Tomlinson}{kt@cs.cornell.edu}
\icmlcorrespondingauthor{Austin R.~Benson}{arb@cs.cornell.edu}

\icmlkeywords{discrete choice, disagreement, group decision-making}

\vskip 0.3in
]



\printAffiliationsAndNotice{}  

\begin{abstract}
The way that people make choices or exhibit preferences can be strongly
affected by the set of available alternatives, often called the choice set.
Furthermore, there are usually heterogeneous preferences, either at an individual level
within small groups or within sub-populations of large groups.
Given the availability of choice data, there are now many models that capture this behavior
in order to make effective predictions---however, there is little work in understanding how directly changing
the choice set can be used to influence the preferences of a collection of decision-makers.
Here, we use discrete choice modeling to develop an optimization framework of such interventions for
several problems of group influence, namely
maximizing agreement or disagreement and promoting a particular choice.
We show that these problems are NP-hard in general, but imposing
restrictions reveals a fundamental boundary: 
promoting a choice can be easier than encouraging consensus or sowing discord.
We design approximation algorithms for the hard problems and show
that they work well on real-world choice data.
\end{abstract}

\section{Context effects and optimizing choice sets}
Choosing from a set of alternatives is one of the most important actions people take,
and choices determine the composition of governments, the success of corporations, and the formation of social connections.
For these reasons, choice models have received significant attention in the fields of 
economics~\cite{train2009discrete}, 
psychology~\cite{tversky1981framing}, and, 
as human-generated data has become increasingly available online, computer science~\cite{overgoor2019choosing,seshadri2019discovering,rosenfeld2019predicting}.
In many cases, it is important that people have heterogeneous preferences; 
for example, people living in different parts of a town might prefer different government policies.

Much of the computational work on choice has been devoted to fitting models for predicting future choices.
In addition to prediction, another area of interest is determining effective interventions to influence 
choice---advertising and political campaigning are prime examples.
In heterogeneous groups, the goal might be to encourage consensus~\cite{amir2015care}, or,
for an ill-intentioned adversary, to sow discord, e.g., amongst political parties~\cite{rosenberg2020chaos}.

One particular method of influence is introducing new alternatives or options.
While early economic models assume that alternatives are irrelevant to the relative ranking of options~\cite{luce1959individual,mcfadden1974conditional},
experimental work has consistently found that new alternatives have strong effects on our choices~\cite{huber1982adding,simonson1992choice,shafir1993reason,trueblood2013context}.
These effects are often called \emph{context effects} or \emph{choice set effects}.
A well-known example is the compromise effect~\cite{simonson1989choice},
which describes how people often prefer a middle ground (e.g., the middle-priced wine).
Direct measurements on choice data have also revealed choice set effects in several
domains~\cite{benson2016relevance,seshadri2019discovering}.

Here, we pose adding new alternatives as a discrete optimization problem for influencing a collection of decision makers, such as the inhabitants of a city or the visitors to a website.
To this end, we consider various models for how someone makes a choice from a given set of alternatives,
where the model parameters can be readily estimated from data.
In our setup, everyone has a base set of alternatives from which they make a choice,
and the goal is to find a set of additional alternatives to optimize some function of the group's joint preferences on the base set.
We specifically analyze three objectives:
(i) \emph{agreement} in preferences amongst the group;
(ii) \emph{disagreement} in preferences amongst the group; and
(iii) \emph{promotion} of a particular item (decision).

We use the framework of \emph{discrete choice}~\cite{train2009discrete} to probabilistically model a person's choice from a given set of items, called the \emph{choice set}.
These models are parameterized for individual preferences, and
when fitting parameters from data, preferences are commonly aggregated at the level of a sub-population of individuals.
Discrete choice models such as the multinomial logit and elimination-by-aspects have played a central role in behavioral economics for several decades with diverse applications, including
forest management~\cite{hanley1998using}, social networks formation~\cite{overgoor2019choosing}, and marketing campaigns~\cite{fader1990elimination}.
More recently, new choice data and algorithms have spurred machine learning research
on models for choice set effects~\cite{ragain2016pairwise,chierichetti2018learning,seshadri2019discovering,pfannschmidt2019learning,rosenfeld2019predicting,bower2020salient}.

We provide the relevant background on discrete choice models in \cref{sec:models}.
From this, we formally define and analyze three choice set optimization problems---\agree, \disagree, and \promo---and
analyze them under four discrete choice models:
multinomial logit~\cite{mcfadden1974conditional},
the context dependent random utility model~\cite{seshadri2019discovering},
nested logit~\cite{mcfadden1978modeling}, and
elimination-by-aspects~\cite{tversky1972elimination}.
We first prove that the choice set optimization problems are NP-hard in general for these models.
After, we identify natural restrictions of the problems under which they become tractable.
These restrictions reveal a fundamental boundary:
promoting a particular item within a group is easier than minimizing or maximizing consensus.
More specifically, we show that restricting the choice models can make \promo tractable while leaving \agree and \disagree NP-hard, 
indicating that the interaction between individuals introduces significant complexity to choice set optimization.

After this, we provide efficient approximation algorithms with guarantees for all three problems under several choice models,
and we validate our algorithms on choice data.
Model parameters are learned for different types of individuals based on features (e.g., where someone lives).
From these learned models, we apply our algorithms to optimize group-level preferences.
Our algorithms outperform a natural baseline on real-world
data coming from transportation choices, insurance policy purchases, and online shopping.

\subsection{Related work}

Our work fits within recent interest from computer science and machine learning on discrete choice models in general
and choice set effects in particular.
For example, choice set effects abundant in online data has led to richer data models~\cite{ieong2012predicting,chen2016predicting,ragain2016pairwise,seshadri2019discovering,makhijani2019parametric,rosenfeld2019predicting,bower2020salient},
new methods for testing the presence of choice set effects~\cite{benson2016relevance,seshadri2019discovering,seshadri2019fundamental}, and
new learning algorithms~\cite{kleinberg2017comparison,chierichetti2018learning}.
More broadly, there are efforts on learning algorithms for
multinomial logit mixtures~\cite{oh2014learning,ammar2014s,kallus2016revealed,zhao2019learning},
Plackett-Luce models~\cite{maystre2015fast,zhao2016learning}, and
other random utility models~\cite{oh2015collaboratively,chierichetti2018discrete,benson2018discrete}.

One of our optimization problems is maximizing group agreement by introducing new alternatives.
This is motivated in part by how additional context can sway
opinion on controversial topics~\cite{munson2013encouraging,liao2014can,graells2014people}.
There are also related algorithms for decreasing polarization in
social networks~\cite{garimella2017reducing,matakos2017measuring,chen2018quantifying,musco2018minimizing},
although we have no explicit network and adopt a choice-theoretic framework.

Our choice set optimization framework is similar to assortment optimization in operations research, where the goal is find the optimal set of products to offer in order to maximize revenue~\cite{talluri2004revenue}. Discrete choice models are extensively used in this line of research, including the multinomial logit~\cite{rusmevichientong2010dynamic,rusmevichientong2014assortment} and nested logit~\cite{gallego2014constrained,davis2014assortment} models. We instead focus our attention primarily on optimizing agreement among individuals, which has not been explored in traditional revenue-focused assortment optimization.

Finally, our problems relate to group decision-making.
In psychology, introducing new shared information is critical for group decisions~\cite{stasser1985pooling,lu2012twenty}.
In computer science, the complexity of group Bayesian reasoning
is a concern~\cite{hkazla2017bayesian,hazla2019reasoning}.

\section{Background and preliminaries}\label{sec:models}
We first introduce the discrete choice models that we analyze.
In the setting we explore, one or more individuals make a (possibly random) \emph{choice} 
of a single item (or alternative) from a finite set of items called a \emph{choice set}.
We use $\mathcal U$ to denote the universe of items and $C\subseteq \mathcal U$ the choice set.
Thus, given $C$, an individual chooses some item $x \in C$.

Given $C$, a discrete choice \emph{model} provides a probability for choosing each item $x \in C$.
We analyze four broad discrete choice models that are all \emph{random utility models} (RUMs), which derive from economic rationality.
In a RUM, an individual observes a random utility for each item $x \in C$ and then chooses the one with the largest utility.
We model each individual's choices through the same RUM but with possibly different parameters to capture preference heterogeneity.
In this sense, we have a mixture model.

Choice data typically contains many observations from various choice sets.
We occasionally have data specific enough to model the choices of a particular individual, but often only one choice is recorded per person, making accurate preference learning impossible at that scale. 
Thus, we instead model the heterogeneous preferences of sub-populations or categories of individuals. 
For convenience, we still use ``individual'' or ``person'' when referring to components of a mixed population, since we can treat each component as a decision-making agent with its own preferences.
In contrast, we use the term ``group'' to refer to the entire population.
We use $A$ to denote the set of individuals (in the broad sense above), and $a \in A$ indexes model parameters.

The parameters of the RUMs we analyze can be inferred from data, and
our theoretical results and algorithms assume that we have learned these parameters.
Our analysis focuses on how the probability of selecting an item $x$ from a choice set $C$ changes as 
we add new alternative items from $\overline C = \mathcal U \setminus C$ to the choice set.

We let $n = |A|$, $k = |C|$, and $m = |\overline C|$ for notation.
We mostly use $n = 2$, which is sufficient for hardness proofs.


\xhdr{Multinomial logit (MNL)}
The multinomial logit (MNL) model~\cite{mcfadden1974conditional} is the workhorse of discrete choice theory.
In MNL, an individual $a$'s preferences are encoded by a true utility $u_a(x)$ for every item $x \in \mathcal U$.
The observations are noisy random utilities $\tilde{u}_a(x) = u_a(x) + \varepsilon$, where $\varepsilon$ follows a Gumbel distribution. 
Under this model, the probability that individual $a$ picks item $x$ from choice set $C$ (i.e., $x = \argmax_{y \in C} \tilde{u}_a(y)$)
is the softmax over item utilities:
\begin{equation}
   \Pr(a\gets x \mid C) = \frac{e^{u_a(x)}}{\sum_{y \in C}e^{u_a(y)}}.    
\end{equation}  
We use the term \emph{exp-utility} for terms like $e^{u_a(x)}$.
The utility of an item is often parameterized as a function of features of the item in order to generalize to unseen data.
For example, a linear function is an additive utility model~\cite{tversky1993context} and looks like logistic regression.
In our analysis, we work directly with the utilities.

The MNL satisfies \emph{independence of irrelevant alternatives} (IIA)~\cite{luce1959individual}, 
the property that for any two choice sets $C, D$ and two items $x, y\in C\cap D$:
$\frac{\Pr(a\gets x \mid C)}{\Pr(a\gets y \mid C)} = \frac{\Pr(a\gets x \mid D)}{\Pr(a\gets y \mid D)}$.
In other words, the choice set has no effect on $a$'s relative probability of choosing $x$ or $y$.%
\footnote{Over $a \in A$, we have a mixed logit which does
not have to satisfy IIA~\cite{mcfadden2000mixed}. Here, we are interested in the IIA property at the individual level.}
Although IIA is intuitively pleasing, behavioral experiments show that it is often violated in practice~\cite{huber1982adding,simonson1992choice}. 
Thus, there are many models that account for IIA violations, including the other ones we analyze.  

\xhdr{Context-dependent random utility model (CDM)}
The CDM~\cite{seshadri2019discovering} is an extension of MNL that can model IIA violations.
The core idea is to approximate choice set effects by the effect of each item's presence on the utilities of the other items. 
For instance, a diner's preference for a ribeye steak may decrease relative to a fish option if filet mignon is also available.
Formally, each item $z$ exerts a pull on $a$'s utility from $x$, which we denote $p_a(z, x)$.
The CDM then resembles the MNL with utilities $u_a(x\mid C)= u_a(x) + \sum_{z\in C} p_a(z, x)$. 
This leads to choice probabilities that are a softmax over the context-dependent utilities:
\begin{equation}\label{eq:cdm}
  \Pr(a\gets x \mid C) = \frac{e^{u_a(x\mid C)}}{\sum_{y \in C}e^{u_a(y\mid C)}}.    
\end{equation}

\xhdr{Nested logit (NL)}
The nested logit (NL) model~\cite{mcfadden1978modeling} instead accounts for choice set effects by grouping similar items into \emph{nests} that people choose between successively. For example, a diner may first choose between a vegetarian, fish, or steak meal and then select a particular dish.
NL can be derived by introducing correlation between the random utility noise $\varepsilon$ in MNL; 
here, we instead consider a generalized tree-based version of the model.%
\footnote{Certain parameter regimes in this generalized model do not correspond to RUMs~\cite{train2009discrete},
but this model is easier to analyze and captures the salient structure.}

The (generalized) NL model for an individual $a$ consists of a tree $T_a$ with a leaf for each item in $\mathcal U$,
where the internal nodes represent categories of items. Rather than having a utility only on items, each person $a$ also has utilities $u_a(v)$ on all nodes $v\in T_a$ (except the root). Given a choice set $C$, let $T_a(C)$ be the subtree of $T_a$ induced by $C$ and all ancestors of $C$. To choose an item from $C$, $a$ starts at the root and repeatedly picks between the children of the current node according to the MNL model until reaching a leaf.

\xhdr{Elimination-by-aspects (EBA)}
While the previous models are based on MNL, the elimination-by-aspects (EBA) model~\cite{tversky1972elimination} has a different structure. 
In EBA, each item $x$ has a set of \emph{aspects} $x'$ representing properties of the item, and person $a$
has a utility $u_a(\chi) > 0$ on each aspect $\chi$. An item is chosen by repeatedly picking an aspect with probability proportional to its utility and eliminating all items that do not have that aspect until only one item remains (or, if all remaining items have the same aspects, the choice is made uniformly at random). For example, a diner may first eliminate items that are too expensive, then disregard meat options, and finally look for dishes with pasta before choosing mushroom ravioli.
Formally, let $C' = \bigcup_{x\in C} x'$ be the set of aspects of items in $C$ and let $C^0 = \bigcap_{x\in C} x'$ be the aspects shared by all items in $C$. Additionally, let $C_\chi = \{x \in C \mid \chi \in x'\}$. The probability that individual $a$ picks item $x$ from choice set $C$ is recursively defined as
\begin{equation}\label{eq:eba}
 \Pr(a \gets x \mid C) = \frac{\sum_{\chi \in x' \setminus C^0} u_a(\chi)\Pr(a \gets x \mid C_\chi)}{\sum_{\psi \in C'\setminus C^0} u_a(\psi)}.
\end{equation}
If all remaining items have the same aspects ($C' = C^0$), the denominator is zero, and $\Pr(a \gets x \mid C) = \frac{1}{|C|}$ in that case.

\xhdr{Encoding MNLs in other models}
Although the three models with context effects appear quite different, they all subsume the MNL model.
Thus, if we prove a problem hard under MNL, then it is hard under all four models.

\begin{lemma}\label{lemma:mnl_special_case}
The MNL model is a special case of the CDM, NL, and EBA models.
\end{lemma}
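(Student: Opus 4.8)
The plan is to exhibit, for each of the three context-dependent models, a restriction of its parameters under which its choice probabilities reduce exactly to the MNL softmax $\frac{e^{u_a(x)}}{\sum_{y \in C} e^{u_a(y)}}$, using the same utilities $u_a(x)$. Since the three reductions are independent, I would treat them one at a time. The CDM case is immediate: setting every pull $p_a(z,x) = 0$ makes the context-dependent utility $u_a(x \mid C) = u_a(x) + \sum_{z \in C} p_a(z,x)$ collapse to $u_a(x)$, so \cref{eq:cdm} becomes the MNL formula verbatim. For NL, I would take $T_a$ to be the flat, depth-one tree whose root has every item of $\mathcal U$ as a direct leaf-child, with leaf utilities equal to the MNL utilities $u_a(x)$. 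For any choice set $C$, the induced subtree $T_a(C)$ is just the root together with the leaves in $C$, so the nested procedure consists of a single MNL choice among the items of $C$, which is precisely the MNL probability.

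The EBA reduction takes the most care. I would give each item $x$ a single private aspect $\chi_x$, so the aspect sets $x'$ are pairwise disjoint, and set $u_a(\chi_x) = e^{u_a(x)}$, which is positive as EBA requires. Then for $|C| \ge 2$ the shared-aspect set is $C^0 = \emptyset$ and $C' = \{\chi_x : x \in C\}$, and since $\chi_x$ belongs only to $x$ we have $C_{\chi_x} = \{x\}$ and hence $\Pr(a \gets x \mid C_{\chi_x}) = 1$. Substituting into the recursion \cref{eq:eba} collapses the numerator sum to the single term $u_a(\chi_x)$, giving $\Pr(a \gets x \mid C) = \frac{e^{u_a(x)}}{\sum_{y \in C} e^{u_a(y)}}$, the MNL probability.

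The main obstacle, such as it is, is purely the bookkeeping in the EBA case: one must confirm that disjoint aspects force $C^0 = \emptyset$ and $C_{\chi_x} = \{x\}$, so that the first aspect drawn always eliminates all but one item and the recursion terminates after a single draw without ever reaching a deeper level or the uniform-tie case. The CDM and NL reductions involve no real difficulty.
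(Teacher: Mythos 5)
Your proposal is correct and follows exactly the same construction as the paper's proof: zero pulls for CDM, a flat depth-one tree for NL, and a unique private aspect per item with weight $e^{u_a(x)}$ for EBA, where $C_{\chi_x}=\{x\}$ collapses the recursion to the MNL softmax. The only difference is cosmetic---you spell out the $C^0=\emptyset$ bookkeeping for $|C|\ge 2$ slightly more explicitly than the paper does.
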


\begin{proof}
Let $\mathcal M$ be an MNL model. 
For the CDM, use the utilities from $\mathcal M$ and set all pulls to 0. 
For NL, make all items children of $T_a$'s root and use the utilities from $\mathcal M$. 
Lastly, for EBA,
assign a unique aspect $\chi_x$ to each item $x \in \mathcal U$ with utility $u_a(\chi_x) = e^{u_a(x)}$. 
Following \eqref{eq:eba},
\[\Pr(a \gets x \mid C) = \frac{u_a(\chi_x)\Pr(a \gets x \mid C_{\chi_x})}{\sum_{\psi \in C'\setminus C^0} u_a(\psi)}.\]
Since $C_{\chi_x} = \{x\}$, $\Pr(a \gets x \mid C_{\chi_x}) = 1$ and thus $\Pr(a \gets x \mid C) \propto u_a(\chi_x) = e^{u_a(x)}$, matching the MNL $\mathcal M$.
\end{proof}

\section{Choice set optimization problems}
By introducing new alternatives to the choice set $C$, we can modify the relationships amongst individual preferences, resulting in different dynamics at the collective level.
Similar ideas are well-studied in voting models, e.g., introducing alternatives to change winners selected by Borda count~\cite{easley2010networks}.
Here, we study how to optimize choice sets for various group-level objectives, measured in terms of individual
choice probabilities coming from discrete choice models.

\xhdr{Agreement and Disagreement}
Since we are modeling the preferences of a collection of decision-makers, one important metric is the amount of disagreement (conversely, agreement) about which item to select. Given a set of alternatives $Z\subseteq \overline C$ we might introduce, we quantify the disagreement this would induce as the sum of all pairwise differences between individual choice probabilities over $C$:
\begin{align}
 D(Z) = \smashoperator{\sum_{\{a, b\}\subseteq A, x\in C}} |\Pr(a\gets x\mid C\cup Z)-\Pr(b\gets x\mid C\cup Z)|. \label{eq:D}
\end{align}
Here, we care about the disagreement on the original choice set $C$ that results from preferences over the new choice set $C \cup Z$.
In this setup, $C$ could represent core options (e.g., two major health care policies under deliberation)
and $Z$ additional alternatives designed to sway opinions.

Concretely, we study the following problem: 
given $A, C, \overline C$, and a choice model, minimize (or maximize) $D(Z)$ over $Z\subseteq \overline C$.
We call the minimization problem \agree and the maximization problem \disagree. \agree has applications in encouraging consensus, 
while \disagree yields insight into how susceptible a group may be to an adversary who wishes to increase conflict. Another potential application for \disagree is to enrich the diversity of preferences present in a group.  

\xhdr{Promotion}
Promoting an item is another natural objective, which is of considerable interest in online advertising and content recommendation. Given $A, C, \overline C$, a choice model, and a target item $x^*\in C$, the \promo problem is to find the set of alternatives $Z\subseteq \overline C$ whose introduction maximizes the number of individuals whose ``favorite'' item in $C$ is $x^*$.
Formally, this means maximizing the number of individuals $a \in A$ for whom
$\Pr(a \gets x^* \mid C\cup Z) > \Pr(a \gets x\mid C \cup Z)$, $x \in C$, $x \neq x^*$.
This also has applications in voting, where questions about the influence of new candidates constantly arise.

One of our contributions in this paper is showing that promotion can be easier (in a computational complexity sense)
than agreement or disagreement optimization.

\section{Hardness results}
We now characterize the computational complexity of \agree, \disagree, and \promo under the four discrete choice models. 
We first show that \agree and \disagree are NP-hard under all four models and that \promo is NP-hard under the three models with context effects.
After, we prove that imposing additional restrictions on these discrete choice models can make \promo tractable while leaving \agree and \disagree NP-hard.
The parameters of some choice models have extra degrees of freedom, e.g., MNL has additive-shift-invariant utilities.
For inference, we use a standard form (e.g., sum of utilities equals zero).
For ease of analysis, we do not use such standard forms,
but the choice probabilities remain unambiguous.

\subsection{\agree}
Although the MNL model does not have any context effects, introducing alternatives to the choice set can still affect the relative preferences of two different individuals. In particular, introducing alternatives can impact disagreement in a sufficiently complex way to make identifying the optimal set of alternatives computationally hard.
Our proof of \cref{thm:mnl_agree_hard} uses a very simple MNL in the reduction, with only two individuals and two items in $C$, where
the two individuals have \emph{exactly the same utilities on alternatives}.
In other words, even when individuals agree about new alternatives, encouraging them to agree over the choice set is hard. 

\begin{theorem}\label{thm:mnl_agree_hard}
  In the MNL model, \agree is NP-hard,
  even with just two items in $C$
  and two individuals that have identical utilities on items in $\overline C$.
\end{theorem}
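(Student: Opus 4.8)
The plan is to reduce from \subsetsum. Given positive integers $s_1,\dots,s_m$ and a target $t$, I build an instance with two individuals $a,b$, choice set $C=\{x,y\}$, and $m$ alternatives $z_1,\dots,z_m$ in $\overline C$, where both individuals share the exp-utilities $e^{u_a(z_i)}=e^{u_b(z_i)}=s_i$, so the hypothesis that the two individuals agree on $\overline C$ holds. The utilities of $a$ and $b$ on $x,y$ are the only parameters I tune, and I choose them to differ between the individuals.

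The key simplification is that, because the two individuals have identical exp-utilities on every alternative, introducing $Z$ affects each denominator only through the single scalar $W=\sum_{z\in Z}e^{u(z)}=\sum_{i\in Z}s_i$. Writing $\alpha_x=e^{u_a(x)}$, $\alpha_y=e^{u_a(y)}$, $\beta_x=e^{u_b(x)}$, $\beta_y=e^{u_b(y)}$, $A_C=\alpha_x+\alpha_y$, and $B_C=\beta_x+\beta_y$, the disagreement becomes a function of $W$ alone: $D(Z)=D(W)=|f(W)|+|g(W)|$, where $f(W)=\frac{\alpha_x}{A_C+W}-\frac{\beta_x}{B_C+W}$ and $g(W)=\frac{\alpha_y}{A_C+W}-\frac{\beta_y}{B_C+W}$. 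Placing both fractions over the common positive denominator $Q(W)=(A_C+W)(B_C+W)$, each of $f,g$ has a numerator that is affine in $W$, so $D(W)=\bigl(|N_f(W)|+|N_g(W)|\bigr)/Q(W)$ with $N_f,N_g$ linear. Crucially, the achievable values of $W$ are exactly the subset sums of $s_1,\dots,s_m$.

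The core of the reduction is to choose the four utilities on $x,y$ (with $A_C\neq B_C$, which is necessary since $A_C=B_C$ collapses $D$ to the monotone $\frac{|\alpha_x-\beta_x|+|\alpha_y-\beta_y|}{A_C+W}$) so that $D(W)$ is unimodal with a unique global minimum at $W=t$. I would arrange the signs so that $N_g$ keeps a fixed sign on the relevant range while $N_f$ vanishes precisely at $W_f=t$; then $|f|$ contributes a corner at $t$, and by taking the slope $|\alpha_x-\beta_x|$ large and $\sqrt{A_CB_C}\ge\sum_i s_i$ I can force $D$ to strictly decrease on $[0,t]$ and strictly increase on $[t,\sum_i s_i]$, so that $D(W)>D(t)$ for every achievable $W\neq t$. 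Because the $s_i$ are integers, any subset sum other than $t$ differs from $t$ by at least $1$, which via the dominant corner slope yields a fixed gap $\gamma>0$ with $D(W)\ge D(t)+\gamma$. Setting the threshold $\theta=D(t)+\gamma/2$, the instance admits some $Z$ with $D(Z)\le\theta$ if and only if some subset sums exactly to $t$. Since all utilities are polynomial in the input bit-length, this is a polynomial reduction, establishing NP-hardness.

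I expect the main obstacle to be the analytic verification in the last step: controlling the two absolute values simultaneously to prove that the global minimum of $D(W)$ over the achievable range sits exactly at the target with a quantifiable margin. Handling the sign pattern of $N_f$ and $N_g$ and checking that the corner of $|f|$ dominates the smooth variation of $|g|/Q$, so that no off-target subset sum beats the target, is where the real work lies; everything else is bookkeeping.
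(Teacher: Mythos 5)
Your high-level strategy is exactly the paper's: two individuals with identical exp-utilities on $\overline C$, so that $D$ collapses to a function of the single scalar $W=\sum_{z\in Z}e^{u(z)}$, with the utilities on $C$ engineered so that the minimum over the achievable range sits precisely at the target sum; the paper runs the same reduction from \textsc{Partition} instead of \subsetsum. However, as written your proposal has a genuine gap, and you flag it yourself: you never exhibit the four utilities on $x,y$, nor verify that any choice of them makes $D(W)$ strictly decreasing on $[0,t]$ and strictly increasing on $[t,\sum_i s_i]$. In a hardness proof, the gadget and its verification \emph{are} the proof; ``I would arrange the signs so that\dots'' leaves the reduction unfinished. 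For comparison, the paper commits to concrete numbers---$u_a(x)=u_a(y)=\log t$, $u_b(x)=\log 3t$, $u_b(y)=\log 2t$---for which $D=t/(5t+W)$ on $[0,t]$ and $D=3tW/\bigl((2t+W)(5t+W)\bigr)$ on $[t,2t]$, and \textsc{Partition}'s total sum $2t$ lies below $\sqrt{A_CB_C}=\sqrt{10}\,t$, so the whole achievable range stays on the increasing side of the smooth branch. That range issue is exactly what your \subsetsum variant must confront head-on, since $\sum_i s_i$ can vastly exceed $t$; this is why the paper reduces from \textsc{Partition}.

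The good news is that your plan is realizable, and your structural conditions---one numerator of fixed sign, the other vanishing at $W=t$, plus $\sqrt{A_CB_C}\ge\sum_i s_i$---are exactly the right ones. Concretely, take $u=\sum_i s_i$ and set $\alpha_x=\alpha_y=u$, $\beta_x=2u$, $\beta_y=2u+t$, so $A_C=2u$ and $B_C=4u+t$. Then
\begin{align*}
N_f(W) = u(t-W), \qquad N_g(W) = -\bigl(ut+(u+t)W\bigr) < 0,
\end{align*}
so on $[0,t]$ the opposite signs give $|N_f|+|N_g|=t(2u+W)$ and hence $D(W)=t/(4u+t+W)$, strictly decreasing, while on $[t,\sum_i s_i]$ the matching signs give $D(W)=(2u+t)W/Q(W)$, strictly increasing because $W\le u<\sqrt{2u(4u+t)}=\sqrt{A_CB_C}$. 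Thus the unique minimizer over $[0,\sum_i s_i]$ is $W=t$, the threshold $D(t)=t/(4u+2t)$ is an exactly computable rational (your gap-$\gamma$ device also works but is unnecessary here), and some $Z$ achieves $D(Z)\le D(t)$ if and only if some subset sums to $t$. With these values written down and this two-line verification supplied, your argument becomes a complete and correct proof---essentially the paper's, adapted to \subsetsum; without them it is a plan for one.
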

\begin{proof}
By reduction from \textsc{Partition}, an NP-complete problem~\cite{karp1972reducibility}. Let $S$ be the set of integers we wish to partition into two subsets with equal sum. We construct an instance of \disagree with $A=\{a, b\}$, $C = \{x, y\}$, $\overline C=S$ (abusing notation to identify alternatives with the \textsc{Partition} integers). Let $t = \frac{1}{2}\sum_{z\in S} z$. 
Define the utilities as: 
$u_{a}(x) = \log t$, $u_{b}(x) = \log 3t$,
$u_{a}(y) = \log t$, $u_{b}(y) = \log 2t$,
and $u_a(z) = u_b(z) = \log z$ for all $z \in \overline C$.
%
The disagreement induced by a set of alternatives $Z\subseteq \overline C$ is characterized by its sum of exp-utility $s_Z = \sum_{z \in Z} z$:
\[
 D(Z) = \left|\frac{t}{2t + s_Z} - \frac{3t}{5t+ s_Z}\right| + \left|\frac{t}{2t + s_Z} - \frac{2t}{5t+s_Z}\right|.
\]
The total exp-utility of all items in $\overline C$ is $2t$.
On the interval $[0, 2t]$, $D(Z)$ is minimized at $s_Z = t$ (\cref{fig:D_plots}, left).
Thus, if we could efficiently find the set $Z$ minimizing $D(Z)$, then we could efficiently solve \textsc{Partition}.
\end{proof}

From \cref{lemma:mnl_special_case}, the other models we consider can all encode any MNL instance, which leads to the following corollary.
\begin{corollary}\label{cor:agree_hard}
  \agree is NP-hard in the CDM, NL, and EBA models.
\end{corollary}

\begin{figure}[t]
    \centering
    \includegraphics[width=0.49\columnwidth]{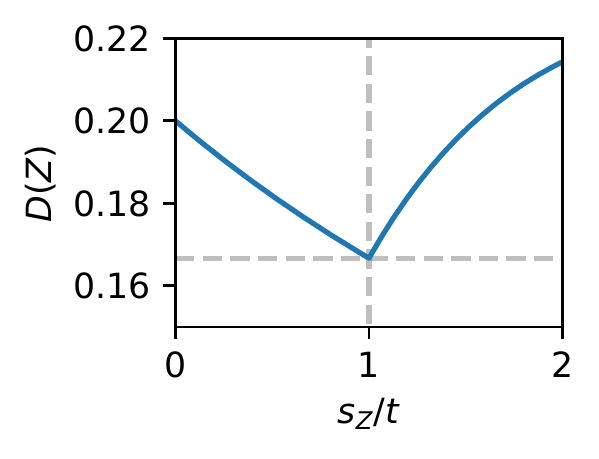}
    \hfill
    \includegraphics[width=0.49\columnwidth]{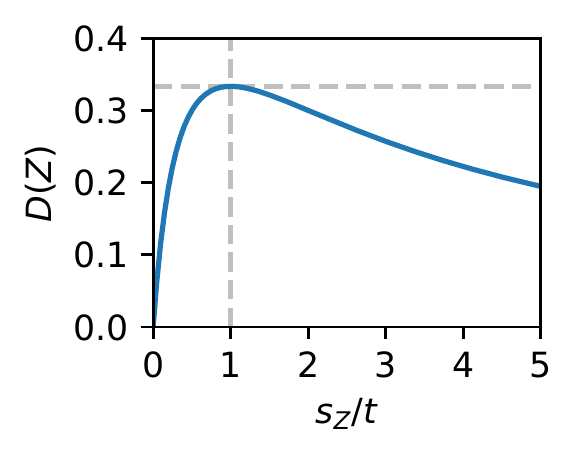}
    \vspace{-5mm}
    \caption{(Left) Plot of $\textstyle D(Z) = \big|\frac{t}{2t + s_Z} - \frac{3t}{5t+ s_Z}\big| + \big|\frac{t}{2t + s_Z} - \frac{2t}{5t+s_Z}\big|$ from the proof of \Cref{thm:mnl_agree_hard}. (Right) Plot of $D(Z)=\big| \frac{2t}{2t + s_Z} - \frac{ t/2}{t/2 + s_Z} \big|$ from the proof of \Cref{thm:mnl_disagree_hard}. Both functions are re-parameterized in terms of the ratio $s_Z/t$ by dividing through by $t$ and achieve local optima at $s_Z/t=1$ (i.e.~$s_Z=t$); this can be verified analytically.}
    \label{fig:D_plots}
\end{figure}

\subsection{\disagree}
Using a similar strategy, we can construct an MNL instance whose disagreement is maximized rather than minimized at a particular target value (\cref{thm:mnl_disagree_hard}).
The reduction requires an even simpler MNL setup.

\begin{theorem}\label{thm:mnl_disagree_hard}
  In the MNL model, \disagree is NP-hard,
  even with just one item in $C$
  and two individuals that have identical utilities on items in $\overline C$.
\end{theorem}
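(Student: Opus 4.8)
The plan is to mirror the reduction from \textsc{Partition} used in \cref{thm:mnl_agree_hard}, but to re-engineer the utilities so that the induced disagreement is \emph{maximized} rather than minimized at the partition value. Given a \textsc{Partition} instance $S$, I would build an MNL instance with $A = \{a, b\}$, a single choice-set item $C = \{x\}$, and alternatives $\overline C = S$ (again identifying each integer $z$ with an alternative of exp-utility $z$, i.e.\ $u_a(z) = u_b(z) = \log z$). Setting $t = \frac12 \sum_{z \in S} z$, the total exp-utility of $\overline C$ is $2t$, and the achievable values of $s_Z = \sum_{z \in Z} z$ as $Z$ ranges over subsets of $\overline C$ are exactly the subset sums in $[0, 2t]$, with $s_Z = t$ attainable iff $S$ admits an equal partition.

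First I would observe that with $|C| = 1$ the disagreement collapses to a single term, $D(Z) = |\Pr(a \gets x \mid C \cup Z) - \Pr(b \gets x \mid C \cup Z)|$, and since the two individuals share utilities on $\overline C$, each probability has the form $w / (w + s_Z)$, where $w$ is that person's exp-utility on $x$. So $D$ is a function of $s_Z$ alone. The core calculation is to choose the two exp-utilities $w_a, w_b$ on $x$ so that this function peaks exactly at $s_Z = t$. Differentiating $g(s) = \frac{w_a}{w_a + s} - \frac{w_b}{w_b + s}$ and setting $g'(s) = 0$ gives the unique interior critical point $s = \sqrt{w_a w_b}$; it is a maximum because $g$ vanishes at both $s = 0$ (both probabilities equal $1$) and as $s \to \infty$ (both tend to $0$) while being positive in between. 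Hence I need only pick $w_a, w_b$ with $w_a w_b = t^2$, and the choice $w_a = 2t$, $w_b = t/2$ (i.e.\ $u_a(x) = \log 2t$, $u_b(x) = \log(t/2)$) does exactly this, yielding the function plotted in \cref{fig:D_plots} (right) with its peak at $s_Z = t$.

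To close the reduction, I would argue that since the continuous envelope $g$ strictly increases on $[0,t]$ and strictly decreases on $[t,2t]$, its maximum value $g(t)$ is attained uniquely at $s = t$; consequently the maximum of $D$ over the discrete subset sums equals $g(t)$ precisely when $t$ is itself a subset sum. Thus an efficient solver for \disagree would let us test whether $S$ has an equal partition by checking whether the optimal objective matches $g(t)$, establishing NP-hardness. I expect the only delicate point to be the sign of the optimization: unlike \cref{thm:mnl_agree_hard}, where $t$ had to be a \emph{minimizer}, here the construction must place a \emph{maximizer} at $t$, which is why the two exp-utilities on $x$ must straddle $t$ multiplicatively (product $t^2$) rather than being tuned for a valley. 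Verifying that $s = \sqrt{w_a w_b}$ is genuinely the global maximum on $[0, 2t]$, and not merely a local one, is the main thing to check carefully.
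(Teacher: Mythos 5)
Your proposal is correct and takes essentially the same route as the paper: the identical construction ($u_a(x)=\log 2t$, $u_b(x)=\log(t/2)$, $u_a(z)=u_b(z)=\log z$ on alternatives) and the same key fact that the disagreement, viewed as a function of $s_Z$, is uniquely maximized at $s_Z=t$. The only differences are cosmetic---the paper reduces from \subsetsum with an arbitrary target $t$ rather than from \textsc{Partition}, and it defers the analytic verification of the peak (which you carry out explicitly via the critical point $s=\sqrt{w_a w_b}$) to the caption of \cref{fig:D_plots}.
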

\begin{proof}
By reduction from \textsc{Subset Sum}~\cite{karp1972reducibility}. Let $S$ be a set of positive integers with target $t$. Let $A=\{a, b\}$, $C = \{x\}$, $\overline C=S$, with utilities:
$u_{a}(x) =  \log 2t$, $u_{b}(x) =  \log t/2$, and
$u_a(z) = u_b(z) = \log z$ for all $z \in \overline C$.
Letting $s_Z = \sum_{z \in Z} z$, including $Z\subseteq \overline C$ makes the disagreement
\begin{align*}
D(Z)=\left| \frac{2t}{2t + s_Z} - \frac{ t/2}{t/2 + s_Z} \right|.
\end{align*}
For $s_Z \ge 0$, $D(Z)$ is maximized at $s_Z = t$ (\cref{fig:D_plots}, right). Thus, if we could efficiently maximize $D(Z)$, 
then we could efficiently solve \textsc{Subset Sum}.
\end{proof}

By \cref{lemma:mnl_special_case}, we again have the following corollary. 
\begin{corollary}
  \disagree is NP-hard in the CDM, NL, and EBA models.
\end{corollary}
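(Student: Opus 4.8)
The plan is to reduce from the MNL case already handled in \cref{thm:mnl_disagree_hard} rather than construct a fresh reduction from \textsc{Subset Sum}. The key observation is that the disagreement objective $D(Z)$ in \eqref{eq:D} is a function only of the choice probabilities $\Pr(a \gets x \mid C \cup Z)$; it never references model internals such as utilities, pulls, nests, or aspects. Hence, if a given MNL instance can be reproduced inside CDM, NL, or EBA so that all the relevant choice probabilities coincide, then the value of $D(Z)$ is identical in the richer model for every $Z \subseteq \overline C$, and the reduction transfers verbatim.

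First I would invoke \cref{lemma:mnl_special_case}, which exhibits, for each of CDM, NL, and EBA, an explicit parameter setting that realizes an arbitrary MNL model. Applying this to the two-individual, single-item MNL instance built in the proof of \cref{thm:mnl_disagree_hard} yields corresponding instances of \disagree in each of the three models. The one point that warrants care is that the embedding must preserve the MNL choice probabilities not merely on the base set $C$, but on every augmented set $C \cup Z$, since $Z$ ranges over all subsets of $\overline C$. This holds because the constructions in \cref{lemma:mnl_special_case}---zero pulls for the CDM, a flat one-level tree for NL, and a distinct private aspect per item for EBA---collapse each model's choice rule to the plain MNL softmax on \emph{any} choice set, independent of which alternatives happen to be present.

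Because the probabilities then agree on all sets $C \cup Z$, the quantity $D(Z)$ takes exactly the same values as in the MNL instance, so it is again maximized precisely when $s_Z = t$. Consequently an efficient algorithm for \disagree in any of CDM, NL, or EBA would solve \textsc{Subset Sum}, establishing NP-hardness. Since this is essentially a black-box transfer through \cref{lemma:mnl_special_case}, I do not anticipate a genuine obstacle; the only thing to verify is the uniformity of the probability-preservation across choice sets noted above, which follows immediately from the lemma's constructions.
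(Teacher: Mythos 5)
Your proposal is correct and matches the paper's own argument, which likewise derives the corollary by combining \cref{lemma:mnl_special_case} with the reduction in \cref{thm:mnl_disagree_hard}. The additional point you verify---that the embeddings preserve choice probabilities on every augmented set $C \cup Z$, not just on $C$---is exactly the (implicit) content that makes the paper's one-line transfer valid.
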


\subsection{\promo}
In choice models with no context effects, \promo has a constant-time solution: under IIA, the presence of alternatives has no effect on an individual's relative preference for items in $C$.
However, \promo is more interesting with context effects, and we show that it is NP-hard for CDM, NL, and EBA.
In \Cref{sec:restricted}, we will show that restrictions of these models make \promo tractable
but keep \agree and \disagree hard.

\begin{theorem}
\label{thm:cdm_promo_hard}
  In the CDM model, \promo is NP-hard,
  even with just one individual and three items in $C$.
\end{theorem}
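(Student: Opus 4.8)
The plan is to reduce from \textsc{Subset Sum}: given positive integers $S$ and a target $t$, decide whether some subset of $S$ sums to exactly $t$. I would build a CDM instance with one individual $a$, three items $C = \{x^*, w_1, w_2\}$ where $x^*$ is the promotion target, and one alternative $z_s \in \overline C$ for each $s \in S$. With a single individual the optimization collapses to a decision (the objective is $0$ or $1$), so it suffices to make ``$x^*$ can be made the favorite'' encode \textsc{Subset Sum}. The first thing to record is that, because CDM probabilities are a softmax over context utilities with a common denominator, $x^*$ is the favorite in $C$ iff $u_a(x^* \mid C\cup Z) > u_a(w_i \mid C\cup Z)$ for $i\in\{1,2\}$; the $Z$-items' utilities enter only the shared denominator and so drop out of these comparisons. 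Since $u_a(w \mid C\cup Z) = u_a(w) + \sum_{v\in C\cup Z} p_a(v,w)$ is additive, each difference $f_i(Z) := u_a(x^*\mid C\cup Z) - u_a(w_i\mid C\cup Z)$ is an affine function of the chosen alternatives.

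Next I would engineer the parameters so that these two affine functions sandwich the subset sum onto the target. Writing $s_Z = \sum_{z_s\in Z} s$, I would zero all within-$C$ pulls, set base utilities so that $u_a(x^*)-u_a(w_1)=1-t$ and $u_a(x^*)-u_a(w_2)=t+1$, and assign each alternative the pulls $p_a(z_s,x^*)=0$, $p_a(z_s,w_1)=-s$, $p_a(z_s,w_2)=s$. Then including $z_s$ shifts $f_1$ by $+s$ and $f_2$ by $-s$, giving $f_1(Z) = s_Z - (t-1)$ and $f_2(Z) = (t+1) - s_Z$. All parameters are polynomial-size integers, so the construction is polynomial-time.

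With this setup, $x^*$ is the favorite exactly when $f_1(Z)>0$ and $f_2(Z)>0$, i.e.\ when $t-1 < s_Z < t+1$; since $s_Z$ is an integer, this holds iff $s_Z = t$. Hence a promoting set $Z$ exists iff $S$ has a subset summing to $t$, and an efficient algorithm for \promo would decide \textsc{Subset Sum}. This establishes NP-hardness under CDM with one individual and three items.

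The crux---and the reason three items are needed---is the sandwich. A single comparison (two items) yields only one threshold constraint such as $s_Z > t-1$, which is trivially satisfiable and encodes nothing hard; it is the pair of oppositely signed pulls across the two non-target items that turns ``exceed a threshold'' into ``hit an exact value,'' which is where the subset-sum difficulty enters. The one bookkeeping point to check carefully is the claim that the $Z$-items never affect the $x^*$-versus-$w_i$ comparisons, so that $f_1$ and $f_2$ are genuinely the favorite conditions.
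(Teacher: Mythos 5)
Your proof is correct and follows essentially the same approach as the paper's: a reduction from \textsc{Subset Sum} with one individual and three items, where alternatives' pulls make the two favorite-item comparisons into affine functions of $s_Z$ that sandwich the feasible region onto $s_Z = t$. The only cosmetic difference is where the pulls act (the paper boosts $x^*$ by $z$ and the overshoot item by $2z$, while you push the two competitors down and up by $\pm s$), which changes nothing substantive.
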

\begin{proof}
By reduction from \textsc{Subset Sum}. 
Let set $S$ with target $t$ be an instance of \textsc{Subset Sum}. Let $A=\{a\}$, $C = \{x^*, w, y\}$, $\overline C=S$.
Using tuples interpreted entry-wise for brevity, suppose that we have the following utilities:
\begin{align*}
	u_{a}(\langle x^*, w, y \rangle \mid C) &=  \langle 1, t, -t \rangle \\
	u_a(z) &= -\infty &&\forall z \in \overline C\\
	p_{a}(z, \langle x^*, w, y\rangle ) &=  \langle z, 0, 2z \rangle  &&\forall z \in \overline C.
\end{align*}
We wish to promote $x^*$. Let $s_Z =\sum_{z\in Z} z$. 
When we include the alternatives in $Z$, $x^*$ is the item in $C$ most likely to be chosen if and only if $1+s_Z > t$ and $1+s_Z > -t+2s_Z$. 
Since $s_Z$ and $t$ are integers, this is only possible if $s_Z=t$. 
Thus, if we could efficiently promote $x^*$, then we could efficiently solve \textsc{Subset Sum}.
\end{proof}

We use the same Goldilocks strategy in our proofs for the NL and EBA models: by carefully defining utilities, we create choice instances where the optimal promotion solution is to pick just the right quantity of alternatives to increase preference for one item without overshooting.
However, the NL model has a novel challenge compared to the CDM. 
With CDM, alternatives can increase the choice probability of an item in $C$, but
in the NL, new alternatives only lower choice probabilities.
\begin{theorem}\label{thm:nested_logit_promo_hard}
In the NL model, \promo is NP-hard, even with just two individuals
and two items in $C$.
\end{theorem}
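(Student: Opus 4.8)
The plan is a reduction from \subsetsum, reusing the Goldilocks strategy from the CDM proof but overcoming the obstacle flagged just above the theorem: in the NL model, adding an alternative can only \emph{decrease} the choice probability of an existing item, so for a single individual the preference for $x^*$ over its competitor is \emph{monotone} in the added mass, giving only a one-sided threshold rather than the two-sided ``just right'' condition we need. I would resolve this by using the two individuals as opposing forces, building mirror-image trees so that added alternatives help promote $x^*$ for one individual while hurting it for the other; the tension between them is what forces the exact target.

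Concretely, given a \subsetsum instance with positive integers $S$ and target $t$, set $A=\{a,b\}$, $C=\{x^*,y\}$, and $\overline C=S$. For individual $a$, build the tree $T_a$ whose root has the leaf $x^*$ and an internal nest $N_a$ as children, where $N_a$ contains the leaf $y$ together with a leaf for every alternative in $\overline C$. For individual $b$, use the mirror image: the root has the leaf $y$ and a nest $N_b$, where $N_b$ contains $x^*$ and all alternatives. Put unit exp-utility on the item leaves, $u_a(x^*)=u_a(y)=u_b(x^*)=u_b(y)=0$, match the alternative exp-utilities to the integers, $u_a(z)=u_b(z)=\log z$, and choose the nest utilities $u_a(N_a)=\log(t+\tfrac12)$ and $u_b(N_b)=\log(t+\tfrac32)$.

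Next I would write out the NL choice probabilities as products along root-to-leaf paths. Writing $s_Z=\sum_{z\in Z}z$, for individual $a$ one gets $\Pr(a\gets x^*\mid C\cup Z)=\tfrac{1}{1+(t+1/2)}$ (independent of $Z$, since $x^*$ sits at the top level) and $\Pr(a\gets y\mid C\cup Z)=\tfrac{t+1/2}{1+(t+1/2)}\cdot\tfrac{1}{1+s_Z}$. A short calculation shows $a$ prefers $x^*$ exactly when $s_Z>t-\tfrac12$, i.e.\ (as $s_Z$ is an integer) when $s_Z\ge t$. The mirror computation for $b$, where now $x^*$ is the nested item whose probability is diluted by the alternatives, yields that $b$ prefers $x^*$ exactly when $s_Z<t+\tfrac12$, i.e.\ when $s_Z\le t$. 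Thus both individuals have $x^*$ as their favorite if and only if $t\le s_Z\le t$, that is $s_Z=t$, while any $Z$ with $s_Z\neq t$ leaves at least one individual unsatisfied. Consequently the maximum number of individuals for whom $x^*$ is the favorite equals $2$ if and only if some subset of $S$ sums to $t$, so an efficient algorithm for \promo would solve \subsetsum.

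The only genuine obstacle is the monotonicity issue described in the first paragraph, and the mirror-tree construction is precisely what converts two one-sided thresholds into the two-sided integer window $s_Z=t$; the remaining work is the routine algebra verifying the two threshold inequalities and choosing the additive $\tfrac12$ offsets so that the thresholds straddle $t$ from opposite sides.
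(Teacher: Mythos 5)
Your proposal is correct and takes essentially the same approach as the paper's proof: a \subsetsum reduction with mirror-image trees in which $x^*$ is nested alongside the alternatives for one individual while $y$ is nested alongside them for the other, so that the two one-sided cannibalization thresholds pin down $s_Z = t$ exactly. The only differences are cosmetic---you place the threshold values $t+\tfrac12$ and $t+\tfrac32$ on the nest utilities with unit item utilities, whereas the paper puts $t\pm\varepsilon$ on the nested items' utilities with nest utility $\log 2$---and your algebra checks out.
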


\begin{proof}
  By reduction from \textsc{Subset Sum}.
  Let $S=\{z_1, \dots, z_n\}, t$ be an instance of \textsc{Subset Sum}.
  Let $A=\{a, b\}$, $C = \{x^*, y\}$, $\overline C=S$, and $0 < \varepsilon< 1$. The nest structures and utilities are 
  shown in \cref{fig:logit_trees}.
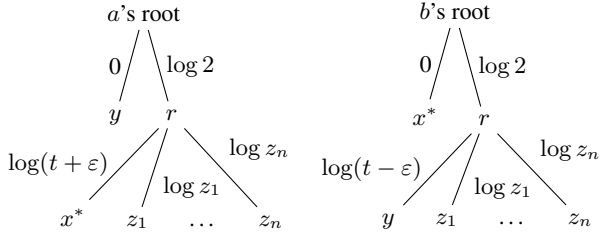
\begin{figure}[h]
\begin{footnotesize}
\begin{forest}
for tree={
  l sep=3em,
  s sep=1em,
},
[$a$'s root
  [$y$, edge label={node[midway, left] {0}}]
  [$r$, edge label={node[midway, right] {$\log 2$}}
    [$x^*$, edge label={node[midway, left] {$\log(t+\varepsilon)\;$}}]
    [$z_1$, edge label={node[midway, below right] {$\log z_1$}}]
    [\dots, no edge]
    [$z_n$, edge label={node[midway, above right] {$\log z_n$}}]
  ] 
]
\end{forest}
\,
\begin{forest}
for tree={
  l sep=3em,
  s sep=1em,
},
[$b$'s root
  [$x^*$, edge label={node[midway, left] {0}}]
  [$r$, edge label={node[midway, right] {$\log 2$}}
    [$y$, edge label={node[midway, left] {$\log(t-\varepsilon)\;$}}]
    [$z_1$, edge label={node[midway, below right] {$\log z_1$}}]
    [\dots, no edge]
    [$z_n$, edge label={node[midway, above right] {$\log z_n$}}]
  ] 
]
\end{forest}
\end{footnotesize}
\caption{NL trees used in the proof of \cref{thm:nested_logit_promo_hard}. The left tree is for individual $a$ and the right tree for individual $b$.}
\label{fig:logit_trees}
\end{figure}
 We wish to promote $x^*$. With just the choice set $C$, $a$ prefers $x^*$ to $y$, but $b$ does not. To make $b$ prefer $x^*$ to $y$, we need to cannibalize $y$ by adding $z_i$ items. However, this simultaneously cannibalizes $x^*$ in $a$'s tree, so we need to be careful not to introduce too much additional utility. To ensure $a$ prefers $x^*$, we need to pick $Z$ such that
\begin{align*}
\Pr(a\gets y \mid C\cup Z) &<  \Pr(a\gets y \mid C\cup Z)\\
\iff \frac{1}{1 + e^{\log 2}} &< \frac{e^{\log 2}}{1 + e^{\log 2}} \cdot \frac{e^{\log(t+\varepsilon)}}{e^{\log(t+\varepsilon)} + \sum_{z \in Z} e^{\log z}}\\
\iff \frac{1}{3} &< \frac{2}{3} \cdot \frac{t+\varepsilon}{t+\varepsilon + \sum_{z \in Z} z}\\
\iff \sum_{z \in Z} z &< t+\varepsilon.
\end{align*}
To ensure $b$ prefers $x^*$, we need 
\begin{align*}
\Pr(b\gets x^* \mid C\cup Z) &>  \Pr(b\gets y \mid C\cup Z)\\
\iff \frac{1}{1 + e^{\log 2}} &> \frac{e^{\log 2}}{1 + e^{\log 2}} \cdot \frac{e^{\log(t-\varepsilon)}}{e^{\log(t-\varepsilon)} + \sum_{z \in Z} e^{\log z}}\\
\iff \frac{1}{3} &> \frac{2}{3} \cdot \frac{t-\varepsilon}{t-\varepsilon + \sum_{z \in Z} z}\\
\iff \sum_{z \in Z} z &> t-\varepsilon.
\end{align*}
Since the $z$ are all integers, we must then have $\sum_{z \in Z} z = t$. If we could efficiently promote $x^*$, we could efficiently find such a $Z$.
\end{proof}

This nested logit construction relies on the two individuals having different nesting structures: notice that $x^*$ and $y$ are swapped in the two trees.
We will see in \cref{sec:restricted} that this is a necessary condition for the hardness of \promo under the nested logit model.

Finally, we have the following hardness result for EBA. 

\begin{theorem}\label{thm:eba_promo_hard}
  In the EBA model, \promo is NP-hard, even with just two individuals
  and two items in $C$.
\end{theorem}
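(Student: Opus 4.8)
The plan is to reduce from \subsetsum, mirroring the Goldilocks strategy of \cref{thm:nested_logit_promo_hard}: given a set $S$ of positive integers and a target $t$, I build an EBA instance with $A=\{a,b\}$, $C=\{x^*,y\}$, and $\overline C = S$ in which $x^*$ can be made the favorite of both individuals exactly when some $Z\subseteq S$ has $s_Z = \sum_{z\in Z}z = t$. As in the nested-logit proof, I want each individual to supply one of two opposing bounds—one forcing $s_Z$ to be large and the other forcing it to be small—so that integrality pins $s_Z=t$. The conceptual difficulty specific to EBA is that, unlike NL trees which may differ across individuals, the assignment of aspects to items is \emph{global}; only the aspect utilities $u_a(\cdot)$ are individual. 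So I cannot simply ``swap'' $x^*$ and $y$ between the two people as in \cref{thm:nested_logit_promo_hard}.

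To get around this, I give every alternative two shared aspects and let the utilities decide which is active for each person. Concretely, let $x^*$ have aspects $\{\alpha,\mu\}$, let $y$ have aspects $\{\beta,\nu\}$, and let each $z_i\in S$ have aspects $\{\zeta_i,\mu,\nu\}$; here $\alpha,\beta,\zeta_i$ are private to a single item, $\mu$ is shared between $x^*$ and the alternatives, and $\nu$ between $y$ and the alternatives. I set $u_a(\zeta_i)=u_b(\zeta_i)=z_i$ so each alternative's weight is its integer value, and give $\alpha,\beta$ tunable base utilities. The key move is to make aspect $\mu$ essentially inert for $a$ (take $u_a(\mu)$ a small positive $\varepsilon$) and aspect $\nu$ essentially inert for $b$ (take $u_b(\nu)$ a small positive $\varepsilon$). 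Then for $a$ the alternatives behave as though shared only with $y$, so adding them cannibalizes $y$ and helps $x^*$; for $b$ they behave as though shared only with $x^*$, so adding them cannibalizes $x^*$ and helps $y$.

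Unwinding the recursion \eqref{eq:eba} with $C^0=\emptyset$ on $C\cup Z$ then gives, for individual $a$, that $x^*$ is preferred to $y$ iff
\[ u_a(\alpha) \;>\; u_a(\beta) + u_a(\nu)\cdot\frac{u_a(\beta)}{u_a(\beta)+s_Z}, \]
whose right-hand side strictly decreases in $s_Z$; hence $a$ favors $x^*$ precisely when $s_Z$ exceeds a threshold. Symmetrically, $b$ favors $x^*$ iff
\[ u_b(\alpha) + u_b(\mu)\cdot\frac{u_b(\alpha)}{u_b(\alpha)+s_Z} \;>\; u_b(\beta), \]
whose left-hand side strictly decreases in $s_Z$, so $b$ favors $x^*$ precisely when $s_Z$ is below a threshold. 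I then choose the base utilities (with $u_a(\alpha)>u_a(\beta)$ and $u_b(\beta)>u_b(\alpha)$) so that $a$'s threshold sits just below $t$ and $b$'s just above it, e.g.\ at $t-\tfrac12$ and $t+\tfrac12$; since $s_Z$ is always an integer, both individuals favor $x^*$ simultaneously if and only if $s_Z=t$, which establishes the reduction.

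The main obstacle is the device of the second paragraph rather than the algebra: I must realize two \emph{opposite} cannibalization directions out of a single, globally shared aspect structure, which forces the two-shared-aspects-per-alternative construction and the use of small but strictly positive inert-aspect utilities. The accompanying subtlety is to make those inert aspects rigorous—because $u_a(\mu),u_b(\nu)$ cannot be exactly zero they perturb both thresholds, so I must verify the perturbation stays within the half-integer slack and thus never disturbs the conclusion $s_Z=t$, exactly the role $\varepsilon$ plays in \cref{thm:nested_logit_promo_hard}. Carefully expanding the nested EBA probabilities to confirm the two displayed threshold conditions and their monotonicity in $s_Z$ is then routine.
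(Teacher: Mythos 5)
Your reduction is correct, but it is a genuinely different construction from the paper's. The paper also reduces from \subsetsum with the same two-individual Goldilocks strategy, yet it encodes the integers in the \emph{shared} aspects: each alternative $z$ gets a triple $\{\chi_z,\psi_z,\gamma_z\}$ with $\chi_z$ also an aspect of $x^*$, $\psi_z$ also an aspect of $y$, and a private padding aspect $\gamma_z$ whose utility $s-z$ is tuned so that the sub-choice reached after picking a shared aspect is exactly $50/50$; individual $a$ puts utility $z$ on $\chi_z$ and zero on $\psi_z$, and $b$ the reverse. The payoff of that padding trick is linearity: each included alternative contributes exactly $z/2$ of effective utility, so both individuals' preference conditions reduce to comparing $s - s_Z/2$ against $s - t/2 \mp \varepsilon$, and correctness is a one-line arithmetic check with constant slack. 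Your construction inverts this: only two globally shared aspects $\mu,\nu$, the integers carried by \emph{private} aspects $\zeta_i$, and the cannibalization routed through one level of the recursion, which makes the conditions hyperbolic in $s_Z$ (e.g.\ $u_a(\alpha) > u_a(\beta) + u_a(\nu)\tfrac{u_a(\beta)}{u_a(\beta)+s_Z}$) rather than linear. Both resolve the same core obstacle---EBA's aspect structure is global, so the asymmetry between individuals must come from utilities alone---but by different devices: the paper zeroes out one family of shared aspects per individual, while you make one shared aspect per individual inert. Your route buys a leaner instance ($|S|+4$ aspects versus $3|S|+2$, with no utilities referencing the global sum $s$) and, as a bonus, respects the model's requirement $u_a(\chi)>0$, which the paper's own proof technically violates by assigning zero utilities; the cost is the messier threshold algebra and the perturbation analysis you defer. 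That deferred step does go through as you claim: near $s_Z\approx t$ the inert-aspect terms shift each threshold by only $O(\varepsilon)$ in $s_Z$-space (perturbations of size $O(\varepsilon/t)$ against a slope of order $1/t$), so any fixed $\varepsilon$ below roughly $1/4$ keeps both thresholds strictly inside $(t-1,t)$ and $(t,t+1)$, and integrality of $s_Z$ finishes the argument exactly as in the paper.
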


\begin{proof}
By reduction from \textsc{Subset Sum}. Let $S, t$ be an instance of \textsc{Subset Sum}. Let $A=\{a, b\}$, $C = \{x^*, y\}$, $\overline C=S$, and $s = \sum_{z\in S} z$. Make aspects $\chi_z, \psi_z, \gamma_z$ for each $z \in S$ as well as two more aspects $\chi, \psi$. The items have aspects as follows:
  \begin{align*}
    x^{*\prime} &= \{\chi\} \cup \{\chi_z\mid z \in S\}\\
    y' &= \{\psi\} \cup \{\psi_z\mid z \in S\}\\
    z' &= \{\chi_z, \psi_z, \gamma_z\} &&\forall z \in S
  \end{align*}
  The individuals have the following utilities on aspects, where $0 < \varepsilon < 1$:
  \begin{equation*}
  \begin{aligned}[c]
    u_a(\chi) &= 0\\
    u_a(\chi_z) &= z\\
    u_a(\psi) &= s - t/2 - \varepsilon\\
    u_a(\psi_z) &= 0\\
    u_a(\gamma_z) &= s-z
  \end{aligned}
  \;
  \begin{aligned}[c]
    u_b(\chi) &= s-t/2 + \varepsilon\\
    u_b(\chi_z) &= 0\\
    u_b(\psi) &= 0\\
    u_b(\psi_z) &= z\\
    u_b(\gamma_z) &= s-z
  \end{aligned}
  \;
  \begin{aligned}[c]
    &\\
    &\forall z \in S\\
    &\\
    &\forall z \in S\\
    &\forall z \in S
  \end{aligned}
  \end{equation*}
  We want to promote $x^*$. Notice that $x^*$ and $y$ have disjoint aspects. Thus the choice probabilities from $C$ are proportional to the sum of the item's aspects:
  \begin{align*}
  \Pr(a \gets x^* \mid C) &\propto s \phantom{\frac{t}{2}}\\
  \Pr(a \gets y \mid C) &\propto s - \frac{t}{2} - \varepsilon\\
  \Pr(b \gets x^* \mid C) &\propto s - \frac{t}{2} + \varepsilon\\
  \Pr(b \gets y \mid C) &\propto s. \phantom{\frac{t}{2}}
  \end{align*}
  
  To promote $x^*$, we need to make $b$ prefer $x^*$ to $y$. Adding a $z$ item cannibalizes from $a$'s preference for $x^*$ and $b$'s preference for $y$. As in the NL proof, we want to add just enough $z$ items to make $b$ prefer $x^*$ to $y$ without making $a$ prefer $y$ to $x^*$.
   
First, notice that the $\gamma_z$ aspects have no effect on the individuals' relative preference for $x^*$ and $y$.
If we introduce the alternative $z$, then if $a$ picks the aspect $\chi_z$, $y$ will be eliminated. The remaining aspects of $x^*$, namely $x^{*\prime} \setminus \{\chi_z\}$, have combined utility $s -z$, as does $\gamma_z$. Therefore $a$ will be equallly likely to pick $x^*$ and $z$. Symmetric reasoning shows that if $b$ chooses aspect $\psi_z$, then $b$ will end up picking $y$ with probability 1/2. This means that when we include alternatives $Z\subseteq \overline C$, each aspect $\chi_z, \psi_z$ for $z\in Z$ effectively contributes $z/2$ to $a$'s utility for $x^*$ and $b$'s utility for $y$ rather than the full $z$. The optimal solution is therefore a set $Z$ of alternatives whose sum is $t$, since that will cause $a$ to have effective utility $s - t/2$ on $x^*$, which exceeds its utility $s - t/2-\varepsilon$ on $y$. Meanwhile, $b$'s effective utility on $y$ will also be $s-t/2$, which is smaller than its utility $s-t/2+\varepsilon$ on $x^*$. If we include less alternative weight, $b$ will prefer $y$. If we include more, $a$ will prefer $y$. 
  Therefore if we could efficiently find the optimal set of alternatives to promote $x^*$, we could efficiently find a subset of $S$ with sum $t$.
\end{proof}

\subsection{Restricted models that make promotion easier}\label{sec:restricted}
We now show that, in some sense, \promo is a fundamentally easier problem than \agree or \disagree.
Specifically, there are simple restrictions on CDM, NL, and EBA that make \promo tractable but leave \agree and \disagree NP-hard. 
Importantly, these restrictions still allow for choice set effects.
In \cref{sec:stubbornness}, we also prove a strong restriction on the MNL model where \agree and \disagree are tractable,
but we could not find meaningful restrictions for similar results on the other models.

\xhdr{2-item CDM with equal context effects}
The proof of \cref{thm:cdm_promo_hard} shows that \promo is hard with only a single individual and three items in $C$.
However, if $C$ only has two items \emph{and} context effects are the same (i.e., $p_a(z, \cdot)$ is the same for all $z \in \overline C$), 
then \promo is tractable.
The optimal solution is to include all alternatives that increase utility for $x^*$ more than the other item, as doing so makes strict progress on promoting $x^*$.
If individuals have different context effects or if there are more than two items, then there can be conflicts between which items should be included (see \cref{sec:cdm_promo_hard_C_2} for a proof that 2-item CDM with \emph{unequal} context effects makes \promo NP-hard).
Although this restriction makes \promo tractable, it leaves \agree and \disagree NP-hard:
the proofs of \cref{thm:mnl_agree_hard,thm:mnl_disagree_hard} can be interpreted as 2-item and 1-item CDMs
with equal (zero) context effects.

\xhdr{Same-tree NL}
If we require that all individuals share the same NL tree structure, but still allow different utilities, then promotion becomes tractable.
For each $z \in \overline C$, we can determine whether it reduces the relative choice probability of $x^*$ based on its position in the tree:
adding $z$ decreases the relative choice probability of $x^*$ if and only if $z$ is a sibling of any ancestor of $x^*$ (including $x^*$) or if it causes such a sibling to be added to $T_a(C)$.
Thus, the solution to \promo is to include all $z$ not in those positions, which is a polynomial-time check.
This restriction leaves \agree and \disagree NP-hard via \cref{thm:mnl_agree_hard,thm:mnl_disagree_hard} as we can still encode any MNL model in a same-tree NL using the tree in which all items are children of the root.

\xhdr{Disjoint-aspect EBA}
The following condition on aspects makes promoting $x^*$ tractable: 
for all $z \in \overline C$, either $z'\cap x^{*\prime} = \emptyset$ or $z'\cap y' = \emptyset$ for all $y \in C$, $y \ne x^*$.
That is, alternatives either share no aspects with $x^*$ or share no aspects with other items in $C$. This prevents alternatives from cannibalizing from both $x^*$ and its competitors.
To promote $x^*$, we include all alternatives that share aspects with competitors of $x^*$ but not $x^*$ itself, 
which strictly promotes $x^*$.
This condition is slightly weaker than requiring all items to have disjoint aspects, which reduces to MNL.
However, this condition is again not sufficient to make \agree and \disagree tractable, since any MNL model can be encoded in a disjoint-aspect EBA instance. 

\subsection{Strong restriction on MNL that makes \agree and \disagree tractable}\label{sec:stubbornness}

As we saw in the proofs of \cref{thm:mnl_agree_hard,thm:mnl_disagree_hard}, \agree and \disagree are hard in the MNL model even when individuals have identical utilities on alternatives.
This is possible because the individuals have different sums of utilities on $C$; one unit of utility on an alternative has a weaker effect for individuals with higher utility sums on $C$. 
To address the issue of identifiability, we assume each individual's utility sum over $\mathcal U$ is zero in this section. 
This allows us to meaningfully compare the sum of utilities of two different individuals.

\begin{definition}
If an individual $a$ has $\sum_{x\in \mathcal U} u_a(x) = 0$, then the \emph{stubbornness} of $a$ is 
$\sigma_a=\sum_{x \in C}e^{u(x)}.$
\end{definition}
 We call this quantity ``stubbornness'' since it quantifies how reluctant an individual is to change its probabilities on $C$ given a unit of utility on an alternative. 

\begin{proposition}
In an MNL model where all individuals are equally stubborn \emph{and} have identical utilities on alternatives, the solution to \agree is $\overline C$.
\end{proposition}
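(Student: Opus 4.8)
The plan is to show that, under both hypotheses, the disagreement $D(Z)$ collapses to a fixed constant divided by a quantity that is strictly increasing in the total exp-utility of $Z$; minimizing disagreement then amounts to making the choice set as large as possible. First I would write out the individual choice probabilities. Write $\sigma$ for the common stubbornness (equal stubbornness means $\sigma_a = \sigma$ for every $a$), and for a candidate set $Z \subseteq \overline C$ let $s_Z = \sum_{z \in Z} e^{u_a(z)}$; because all individuals share the same utilities on alternatives, $s_Z$ does not depend on $a$. Then for every individual $a$ and every $x \in C$,
\[
  \Pr(a \gets x \mid C \cup Z) = \frac{e^{u_a(x)}}{\sigma + s_Z},
\]
since the denominator $\sum_{y \in C} e^{u_a(y)} + \sum_{z \in Z} e^{u_a(z)} = \sigma_a + s_Z = \sigma + s_Z$ is identical for all individuals.

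The key step is then to factor this common denominator out of the disagreement. For any pair $\{a,b\}$ and any $x \in C$, the shared positive denominator gives
\[
  \left| \Pr(a \gets x \mid C \cup Z) - \Pr(b \gets x \mid C \cup Z) \right| = \frac{\left| e^{u_a(x)} - e^{u_b(x)} \right|}{\sigma + s_Z}.
\]
Summing over all pairs and all $x \in C$ yields $D(Z) = N / (\sigma + s_Z)$, where $N = \sum_{\{a,b\} \subseteq A,\, x \in C} \left| e^{u_a(x)} - e^{u_b(x)} \right|$ depends only on the fixed utilities over $C$ and is therefore independent of $Z$.

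Finally, since every exp-utility $e^{u_a(z)}$ is strictly positive, $s_Z$ is strictly increasing in $Z$ under set inclusion, so $D(Z) = N/(\sigma + s_Z)$ is minimized by taking the largest possible $Z$, namely $Z = \overline C$. I expect the only point requiring care is the factoring step: it relies on both hypotheses simultaneously---equal stubbornness makes the $\sigma_a$ agree, and identical alternative utilities make the $s_Z$ agree---so that the denominator is genuinely a single common value that can be pulled outside the absolute value. Once this common denominator is isolated, the $Z$-independence of the numerator $N$ and the monotonicity of $s_Z$ make the conclusion immediate; no optimization over subset structure is actually needed.
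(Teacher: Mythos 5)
Your proposal is correct and follows essentially the same argument as the paper: under equal stubbornness and identical alternative utilities, all individuals share the common denominator $\sigma + s_Z$, so each pairwise per-item disagreement factors as $|e^{u_a(x)}-e^{u_b(x)}|/(\sigma + s_Z)$, which is decreasing in $s_Z$, and hence $Z = \overline C$ is optimal. Your only addition is to explicitly sum this into the form $N/(\sigma + s_Z)$ with $N$ independent of $Z$, which is a presentational refinement rather than a different approach.
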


\begin{proof}
Assume utilities are in standard form, with $\sum_{x\in \mathcal U} u_a(x) = 0$. Let $\sigma=\sum_{x \in C}e^{u(x)}$ be each individual's stubborness and let $Z$ be a set of alternatives. Suppose all individuals have the same utility $u(z)$ for each alternative $z$. The disagreement between two individuals about a single item $x$ in $C$ is then:
\begin{align*}
&\Big| \frac{e^{u_a(x)}}{\sigma + \sum_{z\in Z}e^{u(z)}} - \frac{e^{u_b(x)}}{\sigma + \sum_{z\in Z}e^{u(z)}}\Big|
=\frac{|e^{u_a(x)}-e^{u_b(x)}|}{\sigma + \sum_{z\in Z}e^{u(z)}}.
\end{align*}
Notice that this strictly decreases if $\sum_{z\in Z}e^{u(z)}$ increases, so we minimize $D$ by including all of the alternatives.
\end{proof}

The same reasoning also allows us to trivially solve \disagree in this restricted MNL model.

\begin{corollary}
The solution to \disagree in an equal alternative utilities, equal stubbornness MNL model is $\emptyset$.
\end{corollary}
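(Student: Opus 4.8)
The plan is to reuse the factorization established in the proof of the preceding Proposition and simply reverse the direction of optimization. The key structural observation is that, under equal stubbornness and identical alternative utilities, every individual shares the \emph{same} denominator $\sigma + \sum_{z\in Z}e^{u(z)}$ in their choice probabilities over $C\cup Z$. Consequently each pairwise term $|\Pr(a\gets x\mid C\cup Z)-\Pr(b\gets x\mid C\cup Z)|$ collapses to $|e^{u_a(x)}-e^{u_b(x)}|/(\sigma + \sum_{z\in Z}e^{u(z)})$, and summing over all pairs $\{a,b\}\subseteq A$ and items $x\in C$ yields the closed form
\[
D(Z) = \frac{\sum_{\{a,b\}\subseteq A,\, x\in C} |e^{u_a(x)}-e^{u_b(x)}|}{\sigma + \sum_{z\in Z}e^{u(z)}}.
\]
The numerator is a nonnegative constant that does not depend on $Z$.

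First I would record this expression, which is immediate from the Proposition's computation applied term-by-term and then summed. Next, since \disagree asks us to maximize $D(Z)$ rather than minimize it, and the numerator is fixed, I would observe that maximizing $D(Z)$ is equivalent to minimizing the denominator $\sigma + \sum_{z\in Z}e^{u(z)}$ over $Z\subseteq\overline C$. Because every exp-utility $e^{u(z)}$ is strictly positive, each alternative we add only increases this denominator; hence the denominator is minimized---and $D(Z)$ maximized---precisely at $Z=\emptyset$.

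There is no serious obstacle here: the whole argument is the mirror image of the Proposition, exploiting that $D(Z)$ is a constant divided by a quantity strictly increasing in $Z$. The only point worth stating explicitly is that exp-utilities are positive, so including any alternative strictly shrinks $D$ whenever the numerator is nonzero; this makes $\emptyset$ the unique maximizer when there is genuine disagreement to begin with, and trivially still optimal in the degenerate case where the numerator vanishes.
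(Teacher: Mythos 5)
Your proof is correct and matches the paper's approach: the paper likewise disposes of this corollary by reusing the preceding Proposition's computation, noting that $D(Z)$ is a fixed nonnegative numerator over the common denominator $\sigma + \sum_{z\in Z}e^{u(z)}$, so maximization picks $Z=\emptyset$. Your explicit closed form and the remark about the degenerate zero-numerator case are slightly more detailed than the paper's one-line justification, but the argument is the same.
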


 While this MNL restriction is too strong to be of practical value, it is interesting from a theoretical perspective as it indicates where the hardness of the problem arises.

\section{Approximation algorithms}

\begin{algorithm}[tb]
   \caption{$\varepsilon$-additive approximation for \agree in the MNL model.}
   \label{alg:approx}
\begin{algorithmic}[1]
\algsetup{
  indent=1.5em,
  linenosize=\scriptsize,
  linenodelimiter=
}
  \STATE {\bfseries Input:} $n$ individuals $A$, $k$ items $C$, $m$ alternatives $\overline C$, utilities $u_a(\cdot) > 0$ for each $a \in A$. For brevity:
  \STATE $e_{ax} \gets e^{u_a(x)}$,\; $s_a \gets \sum_{z\in \overline C} e_{az}$,\; $\delta \gets \varepsilon / (2km\binom{n}{2})$ \label{line:delta}
  \STATE $L_0 \gets$ empty $n$-dimensional array whose $a$th dimension has size $1+\lfloor \log_{1+\delta} s_a \rfloor$ (each cell can store a set $Z\subseteq \overline C$ and its $n$ exp-utility sums for each individual)
  \STATE Initialize $L_0[0, \dots, 0] \gets (\emptyset, 0, \dots, 0)$ ($n$ zeros)
  \FOR{$i=1$ {\bfseries to} $m$}
    \STATE $z \gets \overline C[i-1]$,\;\; $L_{i} \gets L_{i-1}$
    \FOR{\textbf{each} cell of $L_{i-1}$ containing $(Z, t_1, \dots, t_n)$}
      \STATE $h\gets$ $n$-tuple w/ entries $\lfloor\log_{1+\delta} (t_j + e_{a_jz})\rfloor$, $\forall j$

      \IF{$L_{i}[h]$ is empty}
        \STATE $L_i[h]\gets (Z\cup\{z\}, t_1 + e_{a_1z}, \dots, t_n + e_{a_nz})$
      \ENDIF
    \ENDFOR
  \ENDFOR

  \STATE $Z_m \gets$ collection of all sets $Z$ in cells of $L_m$
  \RETURN $\arg\min_{Z \in Z_m} D(Z)$ (see \cref{eq:D}) \label{line:return}
\end{algorithmic}
\end{algorithm}

Thus far, we have seen that several interesting group decision-making problems are NP-hard across standard discrete choice models.
Here, we provide a positive result: we can compute arbitrarily good approximate solutions to many instances of these problems in polynomial time.
We focus our analysis on \cref{alg:approx}, which is an $\varepsilon$-additive approximation algorithm to \agree under MNL,
with runtime polynomial in $k$, $m$, and $\frac 1 \varepsilon$, but exponential in $n$ (recall that $k = |C|$, $m=|\overline C|$, and $n=|A|$). In contrast, brute force (testing every set of alternatives) is exponential in $m$ and polynomial in $k$ and $n$.
\agree is NP-hard even with $n = 2$ (\cref{thm:mnl_agree_hard}), so our algorithm provides a substantial efficiency improvement.
We discuss how to extend this algorithm to other objectives and other choice models later in the section. Finally, we present a faster but less flexible mixed-integer programming approach for MNL \agree and \disagree that performs very well in practice.

\Cref{alg:approx} is based on an FPTAS for \subsetsum~\citep[Sec.~35.5]{clrs}, and the first parts of our analysis follow some of the same steps.
The core idea of our algorithm is that a set of items can be characterized by its exp-utility sums for each individual and that there are only polynomially many combinations of exp-utility sums that differ by more than a multiplicative factor $1+\delta$.
We can therefore compute all sets of alternatives with meaningfully different impacts and pick the best one.
For the purpose of the algorithm, we assume all utilities are positive (otherwise we may access a negative index);
utilities can always be shifted by a constant to satisfy this requirement. 

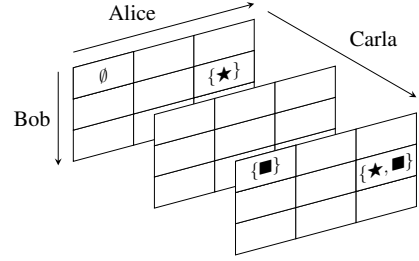
\begin{figure}
\centering
\scalebox{0.83}{
  \begin{tikzpicture}[x=(15:.5cm), y=(90:.5cm), z=(330:.5cm), >=stealth]
\foreach \z in {0, 3, 6} 
\foreach \x in {0,2, 4}
  \foreach \y in {0,...,2}
    \filldraw [fill=white] (\x, \y, \z) -- (\x+2, \y, \z) -- (\x+2, \y+1, \z) -- (\x, \y+1, \z) -- cycle (\x+1, \y+.5, \z) node [yslant=tan(15)] {};
\filldraw [fill=white] (0, 2, 0) -- (2, 2, 0) -- (2, 3, 0) -- (0, 3, 0) -- cycle (1, 2.5, 0) node [yslant=tan(15)] {\scalebox{0.8}{$\emptyset$}};
\filldraw [fill=white] (4, 1, 0) -- (6, 1, 0) -- (6, 2, 0) -- (4, 2, 0) -- cycle (5, 1.5, 0) node [yslant=tan(15)] {\scalebox{0.8}{$\{\bigstar\}$}};

\filldraw [fill=white] (4, 1, 6) -- (6, 1, 6) -- (6, 2, 6) -- (4, 2, 6) -- cycle (5, 1.5, 6) node [yslant=tan(15)] {\scalebox{0.8}{$\{\bigstar, \blacksquare\}$}};

\filldraw [fill=white] (0, 2, 6) -- (0, 3, 6) -- (2, 3, 6) -- (2, 2, 6) -- cycle (1, 2.5, 6) node [yslant=tan(15)] {\scalebox{0.8}{$\{\blacksquare\}$}};

\draw [->] (0, 3.5, 0)  -- (6, 3.5, 0)   node [midway, above left] {Alice};
\draw [->] (-.5, 3, 0)  -- (-.5, 0, 0)   node [midway, left] {Bob};
\draw [->] (6, 3.5, 0.5) -- (6, 3.5, 6) node [midway, above right] {Carla};
\end{tikzpicture}
}
\vspace{-3.5mm}
\caption{Example of the structure $L_i$ used in \Cref{alg:approx} for $n=3$ individuals and $\overline C = \{\bigstar, \blacksquare\}$. Here, Alice has high utility for $\bigstar$ and low utility for $\blacksquare$, Bob has medium utility for $\bigstar$ and low utility for $\blacksquare$, and Carla has low utility for $\bigstar$ and high utility for $\blacksquare$. The exp-utility sums stored in cells are omitted.}
\label{fig:L_diag}
\end{figure}

We now provide an intuitive description of \Cref{alg:approx}. 
The array $L_i$ has one dimension for each individual in $A$ (we use a hash table in practice since $L_i$ is typically sparse). The cells along a particular dimension discretize the exp-utility sums that the individual corresponding to that dimension could have for a particular set of alternatives (\Cref{fig:L_diag}). In particular, if individual $j$ has total exp-utility $t_j = \sum_{y \in Z} e^{u_j(y)}$ for a set $Z$, then we  store $Z$ at index $\lfloor\log_{1+\delta} t_j \rfloor$ along dimension $j$. 

As the algorithm progresses, we place possible sets of alternatives $Z$ in the cells of $L_i$ according to their exp-utility sums $t_1, \dots, t_n$ for each individual (we store $t_1, \dots, t_n$ in the cell along with $Z$). We add one element at a time from $\overline C$ to the sets already in $L_i$ ($L_0$ starts with only the empty set). If two sets have very similar exp-utility sums, they may map to the same cell, in which case only one of them is stored. If the discretization of the array is coarse enough (that is, with large enough $\delta$), many sets of alternatives will map to the same cells, reducing the number of sets we consider and saving computational work. On the other hand, if the discretization is fine enough ($\delta$ is sufficiently small), then the best set we are left with at the end of the algorithm cannot induce a disagreement value too different from the optimal set. We now formalize this reasoning, starting with the following technical lemma that shows items mapping to same cell have similar exp-utility sums.

\begin{lemma}\label{lemma:exp_utility_bounds}
Let $\overline C_i$ be the first $i$ elements processed by the outer for loop of \Cref{alg:approx}. At the end of the algorithm, for all $Z \subseteq \overline C_i$ with exp-utility sums $t_a$, there exists some $Z' \in L_i$ with exp-utility sums $t_a'$ such that 
$\frac{t_a}{(1+\delta)^i} < t_a' < t_a(1+\delta)^i$, for all $a \in A$ (with $\delta$ as defined in \Cref{alg:approx}, Line~\ref{line:delta}).
\end{lemma}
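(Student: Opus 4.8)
The plan is to prove this by induction on $i$, the number of alternatives processed by the outer loop, closely paralleling the correctness argument for the \subsetsum FPTAS but tracking $n$ exp-utility sums simultaneously rather than one. Before starting, I would record the key structural invariant of \cref{alg:approx}: once a cell of the array becomes non-empty it is never overwritten (the inner loop writes only when $L_i[h]$ is empty) and its contents are copied forward by $L_i \gets L_{i-1}$, so any set stored in $L_i$ at the end of iteration $i$ survives unchanged to the end of the algorithm. The base case $i=0$ is immediate: the only $Z \subseteq \overline C_0 = \emptyset$ is $\emptyset$, which is stored exactly at the zero cell with sums $t_a = 0$, so $Z' = \emptyset$ works (interpreting the bounds with equality in the degenerate all-zero case; since utilities are positive, every nonempty $Z$ has strictly positive sums and the strict bounds are meaningful).

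For the inductive step, fix $Z \subseteq \overline C_i$ and let $z$ be the $i$-th processed alternative. I would split on whether $z \in Z$. If $z \notin Z$, then $Z \subseteq \overline C_{i-1}$, and the induction hypothesis supplies a representative $Z'$ with sums satisfying the factor-$(1+\delta)^{i-1}$ bound; by the persistence invariant this $Z'$ is still present in $L_i$, and since $(1+\delta)^{i-1} < (1+\delta)^i$ the same representative satisfies the weaker factor-$(1+\delta)^i$ bound. The interesting case is $z \in Z$. Here I set $W = Z \setminus \{z\} \subseteq \overline C_{i-1}$, whose sums are $t_a - e_{az}$, and apply the induction hypothesis to obtain $W' \in L_{i-1}$ with sums $t_a^{W'}$ within a factor $(1+\delta)^{i-1}$ of $W$'s sums. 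Because $W'$ sits in a cell of $L_{i-1}$, the inner loop of iteration $i$ processes it and attempts to store $W' \cup \{z\}$, whose exact sums are $t_a^{W'} + e_{az}$. Whether or not that write succeeds, the target cell $h$ is non-empty at the end of iteration $i$ and holds some set $Z'$; since $Z'$ and $W' \cup \{z\}$ map to the same cell $h$, their sums agree in every coordinate up to a factor $1+\delta$.

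It then remains to chain the two estimates. The key algebraic observation---which is really the crux---is that the additive new term $e_{az}$ can be absorbed into the multiplicative slack because $(1+\delta)^{i-1} \ge 1$: from $t_a^{W'} < (t_a - e_{az})(1+\delta)^{i-1}$ I get $t_a^{W'} + e_{az} < (t_a - e_{az})(1+\delta)^{i-1} + e_{az} \le t_a(1+\delta)^{i-1}$, and the symmetric lower bound gives $t_a^{W'} + e_{az} > t_a/(1+\delta)^{i-1}$. Combining with the same-cell factor-$(1+\delta)$ relation between $t_a'$ and $t_a^{W'} + e_{az}$ yields exactly $t_a/(1+\delta)^i < t_a' < t_a(1+\delta)^i$, completing the induction. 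The main thing to be careful about is the logic of the second case: the representative $W'$ is guaranteed to be processed in iteration $i$, but the cell it maps to may already be (or become) occupied by a \emph{different} set; the argument only needs that \emph{some} set occupies that cell and that co-located sets differ by less than a factor $1+\delta$, so no uniqueness is required. I would also double-check the boundary handling of the empty set and the strictness of the inequalities when some $t_a = 0$, which is the only place the clean multiplicative bookkeeping degenerates.
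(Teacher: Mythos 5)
Your proof is correct and follows essentially the same route as the paper's: induction on $i$ with the same case split ($z \notin Z$ handled by the never-overwrite persistence of cells, $z \in Z$ by applying the inductive hypothesis to $Z \setminus \{z\}$, absorbing the additive term $e_{az}$ into the multiplicative slack since $(1+\delta)^{i-1} \ge 1$, and chaining with the same-cell factor-$(1+\delta)$ bound). The only cosmetic differences are that you unify the occupied/unoccupied outcomes of the final write into one statement and explicitly flag the degenerate $t_a = 0$ boundary case, which the paper glosses over.
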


\begin{proof}
If a set $Z$ has total exp-utility $t_a$ to individual $a$, then it is placed in $L$ at position $\lfloor \log_{1+\delta} t_a\rfloor$ in dimension $a$. So, if two sets $Z$, $Z'$ with exp-utility totals $t_a, t_a'$ for individual $a$ are mapped to the same cell of $L$, then for all $a\in A$, $\lfloor \log_{1+\delta} t_a\rfloor = \lfloor \log_{1+\delta} t_a'\rfloor$. We can therefore bound $t_a'$: 
\begin{align*}
\log_{1+\delta} t_a - 1 < \log_{1+\delta} t_a' < \log_{1+\delta} t_a +1.
\end{align*}
Exponentiating both sides with base $1+\delta$ and simplifying yields
\begin{equation}
\frac{t_a}{1+\delta} <  t_a' < t_a(1+\delta). \label{eq:mult_bounds}
\end{equation}

With this fact in hand, we proceed by induction on $i$. When $i=0$, $\overline C_i$ is empty and the lemma holds. Now suppose that $i > 0$ and that the lemma holds for $i-1$. Every set in $\overline C_i$ was made by adding (a) zero elements or (b) one element to a set in $\overline C_{i-1}$. We consider these two cases separately. 

(a) For any set $Z\subseteq \overline C_{i}$ that is also contained in $\overline C_{i-1}$, we know by the inductive hypothesis that some element in $L_{i-1}$ satisfied the inequality. Since we never overwrite cells, the lemma also holds for $Z$ after iteration $i$. 

(b) Now consider sets $Z'\subseteq \overline C_{i}$ that were formed by adding the new element $z$ to a set $Z \subseteq \overline C_{i-1}$. In the inner for loop, we at some point encountered the cell containing the set $Y \in L_{i-1}$ satisfying the lemma for set $Z$ by the inductive hypothesis. Let $y_a$ be the exp-utility totals for $Y$ and $t_a$ for $Z$. Notice that the exp-utility totals of $Z'$ are exactly $t_a + e_{az}$. Starting with the inductive hypothesis, we see that the exp-utility totals of $Y\cup \{z\}$ satisfy
\begin{align*}
\frac{t_a + e_{az}}{(1+\delta)^{i-1}} < y_a+ e_{az} < (t_a+ e_{az})(1+\delta)^{i-1}.
\end{align*}    

When we go to place $Y\cup \{z\}$ in a cell, it might be unoccupied, in which case we place it in $L_i$ and the lemma holds for $Z'$. If it is occupied by some other set, then by applying \cref{eq:mult_bounds} we find that the lemma also holds for $Z'$.
\end{proof}

With this lemma in hand, we can prove our main constructive result.

\begin{theorem}\label{thm:approx_alg}
\Cref{alg:approx} is an $\varepsilon$-additive approximation for \agree in the MNL model.
\end{theorem}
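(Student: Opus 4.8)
The plan is to show that the set returned in Line~\ref{line:return} has disagreement within an additive $\varepsilon$ of the optimum. Since \cref{alg:approx} returns $\arg\min_{Z\in Z_m} D(Z)$ over all sets $Z$ stored in $L_m$, it suffices to exhibit a single stored set whose disagreement is at most $\mathrm{OPT}+\varepsilon$, where $\mathrm{OPT}=\min_{Z\subseteq\overline C} D(Z)$. First I would fix an optimal set $Z^*\subseteq\overline C$ with per-individual exp-utility sums $t_a^*=\sum_{z\in Z^*}e_{az}$ and apply \cref{lemma:exp_utility_bounds} with $i=m$ to obtain a stored set $Z'\in L_m$ whose sums $t_a'$ satisfy $t_a^*/(1+\delta)^m < t_a' < t_a^*(1+\delta)^m$ for every $a\in A$.

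Next I would convert this multiplicative closeness of exp-utility sums into closeness of choice probabilities. Writing $S_a=\sum_{y\in C}e_{ay}$ for the base exp-utility over $C$, the MNL probability is $\Pr(a\gets x\mid C\cup Z)=e_{ax}/(S_a+t_a)$, so for each $a$ and $x\in C$ the ratio between the $Z^*$ and $Z'$ probabilities equals $(S_a+t_a^*)/(S_a+t_a')$ and depends only on the exp-utility sums. Adding the common term $S_a$ can only shrink a multiplicative gap, so $(S_a+t_a^*)/(1+\delta)^m < S_a+t_a' < (S_a+t_a^*)(1+\delta)^m$, and hence each probability under $Z'$ lies within a factor $(1+\delta)^{\pm m}$ of its value under $Z^*$. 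In additive form this gives $|\Pr(a\gets x\mid C\cup Z')-\Pr(a\gets x\mid C\cup Z^*)|\le \Pr(a\gets x\mid C\cup Z^*)\big((1+\delta)^m-1\big)$.

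Then I would push these per-term bounds through the definition of $D$. Using $\big||p-q|-|p'-q'|\big|\le|p-p'|+|q-q'|$, the contribution of a pair $\{a,b\}$ and item $x$ to $|D(Z')-D(Z^*)|$ is at most the sum of the two individual probability gaps. The crucial observation is that when I sum over $x\in C$, the prefactors $\Pr(a\gets x\mid C\cup Z^*)$ sum to at most $1$, since they are the probabilities of $a$ choosing \emph{some} item of $C$. Thus each pair contributes at most $2\big((1+\delta)^m-1\big)$ and $|D(Z')-D(Z^*)|\le 2\binom n2\big((1+\delta)^m-1\big)$, with no dependence on $k$. Finally, substituting $\delta=\varepsilon/(2km\binom n2)$ and bounding $(1+\delta)^m-1\le e^{m\delta}-1\le 2m\delta$ (valid when $m\delta\le1$, which holds for meaningful $\varepsilon$) yields $|D(Z')-D(Z^*)|\le 4\binom n2\,m\delta=2\varepsilon/k\le\varepsilon$ for $k\ge2$. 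Since $Z'\in Z_m$, the returned set satisfies $D(\text{output})\le D(Z')\le\mathrm{OPT}+\varepsilon$, and the claim follows.

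I expect the main obstacle to be the second-to-third transition: controlling how the multiplicative factor $(1+\delta)^m$ — which compounds one power of $(1+\delta)$ per iteration through \cref{lemma:exp_utility_bounds} — turns into an additive error in $D$ that is uniform over all $\binom n2 k$ summands. The cancellation of the $k$ factor via $\sum_{x\in C}\Pr(a\gets x\mid C\cup Z^*)\le1$ is precisely what makes the discretization parameter $\delta$ chosen in Line~\ref{line:delta} sufficient; without exploiting it the bound would inflate by a factor of $k$ and the stated $\delta$ would be too coarse to guarantee an $\varepsilon$-additive approximation.
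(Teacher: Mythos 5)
Your argument has the same skeleton as the paper's proof: fix an optimal $Z^*$, invoke \cref{lemma:exp_utility_bounds} with $i=m$ to obtain a stored representative $Z'$, translate multiplicative closeness of exp-utility sums into closeness of choice probabilities, and sum the errors over pairs and items; the final step $D(\text{output})\le D(Z')\le \mathrm{OPT}+\varepsilon$ is identical. Where you genuinely differ is the error accounting. The paper first converts $(1+\delta)^m$ into $1+\beta$ with $\beta=\varepsilon/(k\binom{n}{2})$ and then derives an \emph{absolute} per-probability bound: the probability difference equals $e_{ax}t_a^*\beta\big/\big[(c_a+t_a^*)(c_a+t_a^*(1+\beta))\big]$ (with a similar expression on the other side), which is at most $\beta/2$ because $e_{ax}\le c_a$ and the denominator is at least $2c_at_a^*$; summing over $k$ items and $\binom{n}{2}$ pairs gives exactly $k\binom{n}{2}\beta=\varepsilon$. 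You instead keep a \emph{relative} bound $|\Pr(a\gets x\mid C\cup Z')-\Pr(a\gets x\mid C\cup Z^*)|\le \Pr(a\gets x\mid C\cup Z^*)\big((1+\delta)^m-1\big)$ and exploit $\sum_{x\in C}\Pr(a\gets x\mid C\cup Z^*)\le 1$, which eliminates the factor $k$ and yields $2\binom{n}{2}\big((1+\delta)^m-1\big)\le 2\varepsilon/k$. For $k\ge 3$ this is strictly stronger than what the theorem asks; at $k=2$ it matches.

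However, at $k=1$ your bound is $2\varepsilon$, not $\varepsilon$, so as written you have not proved the stated theorem, which places no restriction on $k$ (and $k=1$ is a legitimate, nontrivial instance of \agree---compare the paper's \disagree hardness proof, which uses $|C|=1$). The shortfall is not merely the crude inequality $e^x-1\le 2x$: even with exact constants, at $k=1$, $n=2$ your per-pair bound is $2(e^{\varepsilon/2}-1)>\varepsilon$, because the relative-error route lets both individuals' probabilities (each possibly near $1$ when $k=1$) drift in opposite directions by the full relative factor. The patch is small: combine your relative bound with the trivial bound $\Pr(\cdot)\le 1$. Writing $p=\Pr(a\gets x\mid C\cup Z^*)$ and $u=(1+\delta)^m$, the upper side satisfies $\Pr(a\gets x\mid C\cup Z')-p\le\min\{p(u-1),\,1-p\}\le 1-1/u\le m\delta$, and the lower side satisfies $p-\Pr(a\gets x\mid C\cup Z')\le p\,(1-1/u)\le m\delta$; since $m\delta=\varepsilon/(2k\binom{n}{2})=\beta/2$, this recovers an absolute per-probability error of $\beta/2$, hence total error $k\binom{n}{2}\beta=\varepsilon$ for every $k\ge 1$, which is precisely what the algorithm's choice of $\delta$ in Line~\ref{line:delta} is calibrated for.
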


\begin{proof}
Let $\beta = \varepsilon / (k\binom{n}{2})$ for brevity.
Following our choice of $\delta$ and using \cref{lemma:exp_utility_bounds},
at the end of the algorithm, the optimal set $Z^*\subseteq \overline C$ (with exp-utility sums $t_a^*$) has some representative $Z'$ in $L_m$ such that
\[
 \frac{t_a^*}{(1+\beta / (2m))^m} < t_a' < t_a^*\left(1+\beta / (2m)\right)^m,\;  \forall a \in A.
\]
Since $e^x \ge (1+ x/m)^m$, we have $t_a^* / e^{\frac{\beta}{2}} < t_a' < t_a^* e^\frac{\beta}{2}$,
and since $e^x \le 1+x+x^2$ when $x< 1$,
\[
 \frac{t_a^*}{1+\beta / 2 + \beta^2 / 4} < t_a' < t_a^* (1+ \beta / 2 + \beta^2 / 4).
\]
Finally, $\frac{t_a^*}{1+\beta} < t_a' < t_a^* (1+\beta)$ because $0 < \beta < 1$.

Now we show that $D(Z^*)$ and $D(Z')$ differ by at most $\varepsilon$. To do so, we first bound the difference between $\Pr(a\gets x \mid C\cup Z^*)$ and $\Pr(a\gets x \mid C\cup Z')$ by $\beta$. Let $c_a = \sum_{x\in C}e_{ax}$ be the total exp-utility of $a$ on $C$. By the above reasoning,
\[
 \frac{e_{ax}}{c_a + t_a^* (1+\beta)} < \frac{e_{ax}}{c_a + t_a'} <\frac{e_{ax}}{c_a + \frac{t_a^*}{1+\beta}},
\]
where the middle term is equal to $\Pr(a\gets x \mid C\cup Z')$.
From the lower bound, the difference between $\Pr(a\gets x \mid C\cup Z^*)$ and $\Pr(a\gets x \mid C\cup Z')$ could be as large as
\begin{align*}
  &\frac{e_{ax}}{c_a + t_a^*} - \frac{e_{ax}}{c_a + t_a^*(1+\beta)} \\
  &= \frac{e_{ax}t_a^*\beta}{(c_a + t_a^*)(c_a + t_a^*(1+\beta))}
  < \frac{e_{ax}t_a^*\beta}{2c_at_a^*} \le \frac{\beta}{2}. 
\end{align*}
From the upper bound, the difference between $\Pr(a\gets x \mid C\cup Z^*)$ and $\Pr(a\gets x \mid C\cup Z')$ could be as large as
\begin{align*}
  &\frac{e_{ax}}{c_a + \frac{t_a^*}{1+\beta}} - \frac{e_{ax}}{c_a + t_a^*} %
  = \frac{e_{ax}t_a(1-\frac{1}{1+\beta})}{(c_a + \frac{t_a^*}{1+\beta})(c_a + t_a^*)} \\
  &= \frac{e_{ax}t_a^*\beta}{(c_a(1+\beta) + t_a^*)(c_a + t_a^*)} 
  < \frac{e_{ax}t_a^*\beta}{2c_a t_a^*}  \le \frac{\beta}{2}. 
\end{align*}
Thus, $\Pr(a\gets x \mid C\cup Z^*)$ and $\Pr(a\gets x \mid C\cup Z')$ differ by at most $\frac{\beta}{2}$. 
Using the same argument for an individual $b$, the disagreement between $a$ and $b$ about $x$ can only increase by $\beta$ with the set $Z$ compared to the optimal set $Z^*$. Since there are $\binom{n}{2}$ pairs of individuals and $k$ items in $C$, the total error of the algorithm is bounded by $k\binom{n}{2}\beta = \varepsilon$.
\end{proof}

We now show that the runtime of \cref{alg:approx} is $O((m+ kn^2)(1+\lfloor \log_{1+\delta} s \rfloor)^n)$, where $s = \max_a s_a$ is the maximum exp-utility sum for any individual. Moreover, for any fixed $n$, this runtime is bounded by a polynomial in $k, m$, and $\frac{1}{\varepsilon}$.  

To see this, first note that the size of $L_i$ is bounded above by $(1+\lfloor \log_{1+\delta} s \rfloor)^n$. For each $z \in \overline C$, we perform constant-time operations\footnote{The algorithm requires computing $\log_{1+\delta}$, which can be done efficiently using a precomputed change-of-base constant and taking logarithms to a convenient base. Our analysis treats these logarithms as constant-time operations, since we care about how the runtime grows as a function of $n, m, k, $ and $\frac{1}{\varepsilon}$.} on each entry of $L_i$, for a total of $O(m (1+\lfloor \log_{1+\delta} s \rfloor)^n)$ time. Then we compute $D(Z)$ for each cell of $L_m$, which takes $O(kn^2)$ time per cell. The total runtime is therefore $O((m+kn^2)(1+\lfloor \log_{1+\delta} s \rfloor)^n)$, as claimed. Finally, $(1+\lfloor \log_{1+\delta} s \rfloor)^n$ is bounded by a polynomial in $m, k$, and $\frac{1}{\varepsilon}$ for any fixed $n$:
\begin{align*}
  (1+\lfloor \log_{1+\delta} s \rfloor)^n &\le \Big(1+ \frac{\ln s}{\ln 1+\delta} \Big)^n\\
  &\le \Big(1+ (1+\delta)\frac{\ln s}{\delta} \Big)^n\tag{since $\ln (1+x) \ge\frac{x}{1+x}$ for $x > -1$}\\
  &= \Big(1+ \frac{\ln s}{\delta} + \ln s\Big)^n\\
  &= \Big(1+ \frac{2km\binom{n}{2}\ln s}{\varepsilon} + \ln s\Big)^n.
\end{align*}

\agree is NP-hard even when individuals have equal utilities on alternatives.
In this case, we only need to compute exp-utility sums for a single individual,
which brings the runtime down to $O((m+ kn^2) \log_{1+\delta} s)$.

\xhdr{Extensions to other objectives and models}
\Cref{alg:approx} can be easily extended to any objective function that is efficiently computable from utilities.
For instance, \cref{alg:approx} can be adapted for \disagree by replacing the $\arg\min$ with an $\arg\max$
on Line~\ref{line:return}.

\Cref{alg:approx} can also be adapted for CDM and NL.
The analysis is similar and details are in \cref{sec:alg_details},
although the running times and guarantees are different.
With CDM, the exponent in the runtime increases to $nk$ for \agree and \disagree,
and the $\varepsilon$-additive approximation is guaranteed only if items in $\overline C$ exert zero pulls on each other.
However, even for the general CDM, our experiments will show that the adapted algorithm remains a useful heuristic.
When we adapt \Cref{alg:approx} for NL, we retain the full approximation guarantee but the exponent in the runtime increases and has a dependence on the tree size.

\promo is not interesting under MNL and also has a discrete rather than continuous objective,
i.e., the number of people with favorite item $x^*$ in $C$.
For models with context effects, we can define a meaningful
notion of approximation.
\begin{definition}
An item $y \in C \cup Z$ is an \emph{$\varepsilon$-favorite} item of individual $a$ if $\Pr(a \gets y \mid C \cup Z) +\varepsilon \ge \Pr(a\gets x \mid C \cup Z)$ for all $x \in C$. A solution \emph{$\varepsilon$-approximates} \promo if the number of people for whom $x^*$ is an $\varepsilon$-favorite item is at least the value of the optimal \promo solution.
\end{definition}  

Using this, we can adapt \Cref{alg:approx} for \promo under CDM and NL.
Again, for CDM, the approximation has guarantees in certain parameter regimes and the NL has full approximation guarantees.
Since we do not have compute $D(Z)$, the runtimes loses the $kn^2$ term compared to the \agree and \disagree versions (\Cref{sec:approx_promo}).

Finally, EBA has considerably different structure than the other models.
We leave algorithms for EBA to future work.

\subsection{Fast exact methods for MNL} 
We provide another approach for solving \agree
and \disagree in the MNL model, based on transforming the objective functions
into mixed-integer bilinear programs (MIBLPs).
MIBLPs can be solved for moderate problem sizes with high-performance
branch-and-bound solvers (we use Gurobi's implementation). In practice, this
approach is faster than \Cref{alg:approx} (for finding optimal solutions---\Cref{alg:approx} will always be faster with sufficiently large $\varepsilon$) and can optimize over larger sets
$\overline C$.  However, this approach does not easily extend to CDM, NL, or
\promo and does not have a polynomial-time runtime guarantee.

\subsubsection{MIBLP formulation for \agree}
Let $x_i$ be a decision variable indicating whether we add in the $i$th item in $\overline C$. Let $e_{ya} = e^{u_a(y)}$ and $e_{Ca} = \sum_{y\in C} e_{ya}$.
 We can write \agree as the following 0-1 optimization problem.
\begin{mini*}{x}{\sum_{a, b \in A} \sum_{y \in C} \left|\frac{e_{ya}}{e_{Ca} + \sum_{i \in \overline C} x_i e_{ia}} - \frac{e_{yb}}{e_{Cb} + \sum_{i \in \overline C} x_i e_{ib}}\right|}{}{}
\addConstraint{x_i \in \{0, 1\}}
\end{mini*}
We can rewrite this with no absolute values by introducing new variables $\delta_{yab}$ that represent the absolute disagreement about item $y$ between individuals $a$ and $b$. We then use the standard trick for minimizing an absolute value in linear programs: 

\begin{mini*}[2]{x}{\sum_{a, b \in A} \sum_{y \in C} \delta_{yab}}{}{}
\addConstraint{\frac{e_{ya}}{e_{Ca} + \sum_{i \in \overline C} x_i e_{ia}} - \frac{e_{yb}}{e_{Cb} + \sum_{i \in \overline C} x_i e_{ib}}}{\le \delta_{yab}\quad}{\breakObjective{\forall y \in C, \{a, b\} \subset A}}
\addConstraint{\frac{e_{yb}}{e_{Cb} + \sum_{i \in \overline C} x_i e_{ib}} - \frac{e_{ya}}{e_{Ca} + \sum_{i \in \overline C} x_i e_{ia}}}{\le \delta_{yab} \quad}{\breakObjective{\forall y \in C, \{a, b\} \subset A}}
\addConstraint{x_i}{\in \{0, 1\}\quad\forall i \in \overline C}{}
\addConstraint{\delta_{yab}}{\in \mathbb{R}\qquad\forall y \in C, \{a, b\} \subset A}{}
\end{mini*}
To get rid of the fractions, we introduce the new variables $z_{a} = \frac{1}{e_{Ca} + \sum_{i} x_i e_{ia}}$ for each individual $a$ and add corresponding constraints enforcing the definition of $z_a$:
\begin{mini*}[1]{x}{\sum_{a, b \in A} \sum_{y \in C} \delta_{yab}}{}{}
\addConstraint{z_a e_{ya} - z_b e_{yb}}{\le \delta_{yab}\quad}{\forall y \in C, \{a, b\} \subset A}
\addConstraint{z_b e_{yb} - z_a e_{ya}}{\le \delta_{yab} \quad}{\forall y \in C, \{a, b\} \subset A}
\addConstraint{z_a e_{Ca} + z_a \sum_{i \in \overline C} x_i e_{ia}}{=1}{\forall a \in A}
\addConstraint{x_i}{\in \{0, 1\}\quad}{\forall i \in \overline C}
\addConstraint{\delta_{yab}}{\in \mathbb{R}\quad}{\forall y \in C, \{a, b\} \subset A}
\addConstraint{z_a}{\in \mathbb{R}}{\forall a \in A}
\end{mini*}

\subsubsection{MIBLP formulation for \disagree}
A similar technique works for \disagree, but maximizing an absolute value is slightly trickier than minimizing. In addition to the variables $\delta_{yab}$ that we used before, we also add new binary variables $g_{yab}$ indicating whether each difference in choice probabilities is positive or negative. With these new variables (and following the same steps as above), \disagree can be written as the following MIBLP:
\begin{maxi*}[1]{x}{\sum_{a, b \in A} \sum_{y \in C} \delta_{yab}}{}{}
\addConstraint{z_a e_{ya} - z_b e_{yb}}{\le \delta_{yab}\quad}{\forall y \in C, \{a, b\} \subset A}
\addConstraint{z_b e_{yb} - z_a e_{ya}}{\le \delta_{yab} \quad}{\forall y \in C, \{a, b\} \subset A}
\addConstraint{2 g_{yab} + z_a e_{ya} - z_b e_{yb}}{\ge \delta_{yab}\quad}{\forall y \in C, \{a, b\} \subset A}
\addConstraint{2 (1-g_{yab}) + z_b e_{yb} - z_a e_{ya}}{\ge \delta_{yab} \quad}{\forall y \in C, \{a, b\} \subset A}
\addConstraint{z_a e_{Ca} + z_a \sum_{i \in \overline C} x_i e_{ia}}{=1}{\forall a \in A}
\addConstraint{x_i}{\in \{0, 1\}\quad}{\forall i \in \overline C}
\addConstraint{g_{yab}}{\in \{0, 1\}\quad}{\forall y \in C, \{a, b\} \subset A}
\addConstraint{\delta_{yab}}{\in \mathbb{R}\quad}{\forall y \in C, \{a, b\} \subset A}
\addConstraint{z_a}{\in \mathbb{R}}{\forall a \in A}
\end{maxi*}

\section{Numerical experiments}\label{sec:results}

We apply our methods to three datasets (\cref{tab:dataset_summary}). 
The \sfwork dataset~\cite{koppelman2006self} comes from a survey of San Francisco residents on available (choice set) and 
selected (choice) transportation options to get to work. We split the respondents into two segments ($\lvert A \rvert = 2$) according to whether or not they live in the ``core residential district of San Fransisco or Berkeley.''
The \allstate dataset~\cite{allstate_data} consists of insurance policies (items) characterized by anonymous categorical features A--G with 2 to 4 values each. Each customer views a set of policies (the choice set) before purchasing one. We reduce the number of items to 24 by considering only features A, B, and C. To model different types of individuals, we split the data into homeowners and non-homeowners (again, $\lvert A \rvert = 2$). 
The \yoochoose dataset~\cite{ben2015recsys} contains online shopping data
of clicks and purchases of categorized items in user browsing sessions.
Choice sets are unique categories browsed in a session and the choice is the category of the purchased product
(categories appearing fewer than 20 times were omitted).
We split the choice data into two sub-populations by thresholding on the purchase timestamps.

For inferring maximum-likelihood models from data, we use PyTorch's Adam optimizer~\cite{kingma2015adam,paszke2019pytorch} with learning rate $0.05$, weight decay $0.00025$, batch size 128, and the \texttt{amsgrad} flag~\cite{reddi2018convergence}. 
We use the low-rank (rank-2) CDM~\cite{seshadri2019discovering} that expresses pulls as the inner product of item embeddings. Our code and data are available at \url{https://github.com/tomlinsonk/choice-set-opt}. 

\begin{table}[t]
\centering
\caption{Dataset statistics: item, observation, and unique choice set counts; and percent of observations in sub-population splits.}
\label{tab:dataset_summary}
\begin{tabular}{l r r r r}
\toprule
Dataset & \# items & \# obs. & \# sets & split \% \\
\midrule
\sfwork & 6 & 5029 & 12 & 16/84\\
\allstate & 24 & 97009 & 2697 & 45/55\\
\yoochoose & 41 & 90493 & 1567 & 47/53 \\
\bottomrule
\end{tabular}
\end{table}

\begin{table}[t]
\centering
\caption{Sum of error over all 2-item choice sets $C$ compared to optimal (brute force) on \sfwork.
\Cref{alg:approx} is optimal.}\label{tab:sfwork_summary}
\begin{tabular}{l l c c c c}
 \toprule
 Model & Problem & Greedy & \Cref{alg:approx} \\
 \midrule
 MNL & \agree & $0.03$ & $ 0.00$ \\
        & \disagree & $0.00$ & $0.00$ \\
  rank-2 CDM & \agree & $0.14$ &$ 0.00$ \\
 & \disagree  &  $0.13$ & $0.00$ \\
  NL & \agree &  $0.00$ & $0.00$ \\
 & \disagree  &  $0.00$  &$0.00$ \\
 \bottomrule
\end{tabular}
\end{table}

For \sfwork under the MNL, CDM, and NL models, we considered all 2-item choice sets $C$ (using all other items for $\overline C$) 
for \agree and \disagree (for the NL model, we used the best-performing tree from \citet{koppelman2006self}).
We compare \Cref{alg:approx} ($\varepsilon = 0.01$) to a greedy approach (henceforth, ``Greedy'') that builds $Z$ by repeatedly selecting the item from $\overline C$ that, when added to $Z$, most improves the objective, if such an item exists.
This dataset was small enough to compare against the optimal, brute-force solution (\cref{tab:sfwork_summary}). 
In all cases, \cref{alg:approx} finds the optimal solution, while Greedy is often suboptimal. 
However, for this value of $\varepsilon$, we find that \cref{alg:approx} searches the entire
space and actually computes the brute force solution (we get the number of sets analyzed by \cref{alg:approx} from $|L_m|$ for a given $\varepsilon$ and compare it to $2^{\lvert \overline C \rvert}$).
Even though we have an asymptotic polynomial runtime guarantee, for small enough datasets, we might not see computational savings.
Running with larger $\varepsilon$ yielded similar results, even for $\varepsilon > 2$, when our bounds are vacuous.

The results still highlight two important points.
First, even on small datasets, Greedy can be sub-optimal. 
For example, for \agree under CDM with $C=\{\text{drive alone}, \text{transit}\}$, \Cref{alg:approx} found the optimal $Z = \{\text{bike, walk}\}$, inferring that both sub-populations agree on both driving less and taking transit less. 
However, Greedy just introduced a carpool option, which has a lower effect on discouraging driving alone or taking transit, resulting in lower agreement between city and suburban residents.

\begin{figure}[t]
    \centering
    \includegraphics[width=\columnwidth]{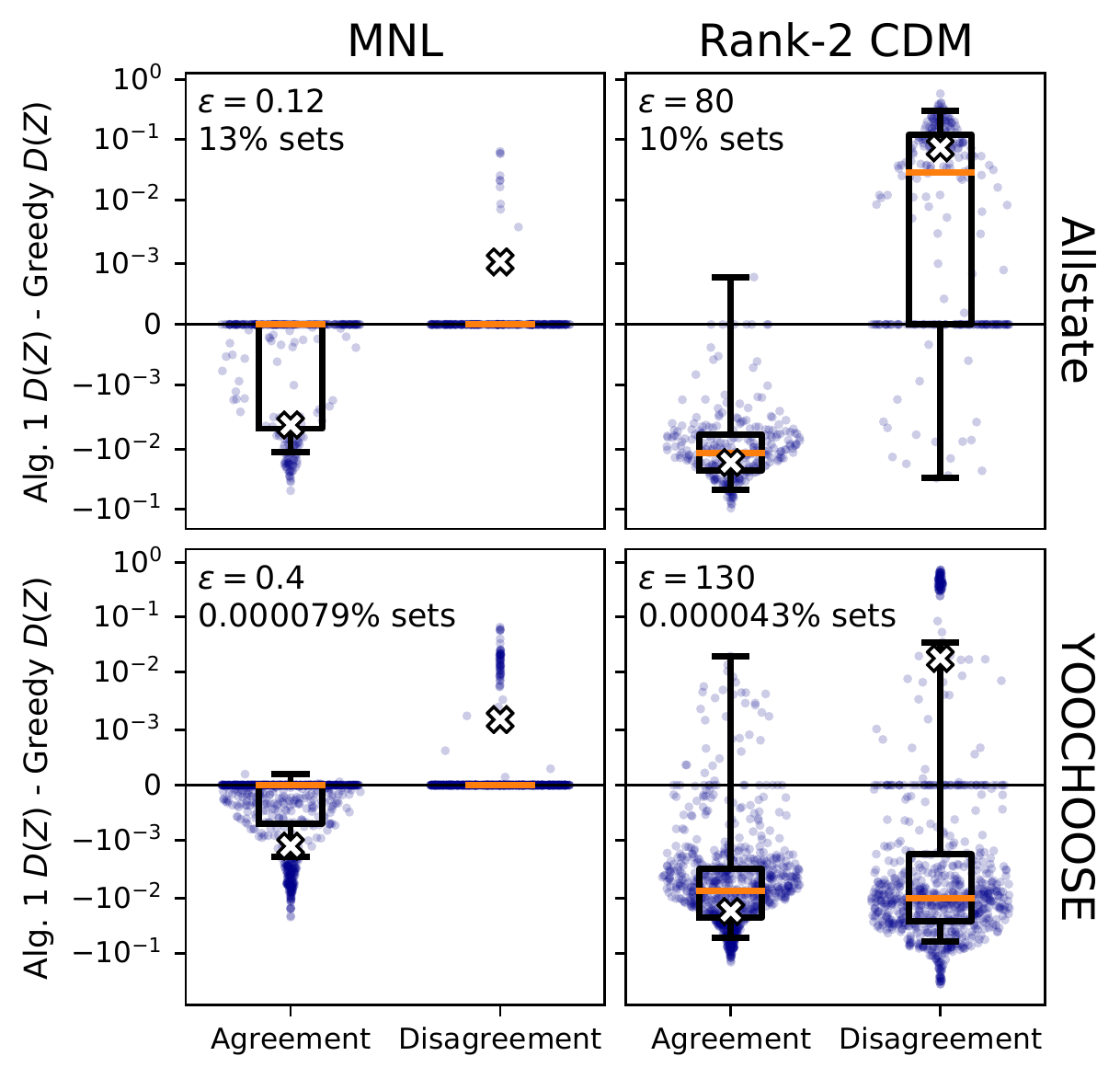}
    \vspace{-9mm}
    \caption{\Cref{alg:approx} vs.\ Greedy performance box plots
    when applied to all 2-item choice sets in \allstate and \yoochoose under MNL and CDM
    (subplots also show $\varepsilon$ and the percent of subsets of $\overline C$ computed by \cref{alg:approx}, written X\% sets).
    Each point is the difference in $D(Z)$ when \Cref{alg:approx} and Greedy are run on a particular choice set. 
    Horizontal spread shows approximate density and the Xs mark means. 
     A negative (resp.~positive) $y$-value for \agree (resp.~\disagree) indicates that \Cref{alg:approx} outperformed Greedy. 
     \Cref{alg:approx} performs better in all cases except for \disagree under CDM on \yoochoose.
     Even in this exception, though, our approach finds a few very good solutions and \cref{alg:approx}
     has better mean performance.}
    \label{fig:all_pairs}
\end{figure}

Second, our theoretical bounds can be more pessimistic than what happens in practice.
Thus, we can consider larger values of $\varepsilon$ to reduce the search space;
 \cref{alg:approx} remains a principled heuristic, and we can measure 
how much of the search space \cref{alg:approx} considers.
This is the approach we take for the \allstate and \yoochoose data,
where we find that \cref{alg:approx} far outperforms its theoretical worst-case bound.
We again considered all 2-item choice sets $C$ and tested our method under MNL and CDM,\footnote{In this case, we did
not have available tree structures for NL, which are difficult to derive from data~\cite{benson2016relevance}.} 
setting $\varepsilon$ so that the experiment took about 30 minutes to run for \allstate and 2 hours for \yoochoose (of that time, Greedy takes 5 seconds to run; the rest is taken up by \cref{alg:approx}).
\Cref{alg:approx} consistently outperforms Greedy (\cref{fig:all_pairs}), even with $\varepsilon > 2$ for CDM. 
Moreover, \cref{alg:approx} only computes a small fraction of possible sets of alternatives, especially for \yoochoose.
\Cref{alg:approx} does not perform as well with the rank-2 CDM as it does with MNL, which is to be expected as we only have approximation guarantees for CDM under particular parameter regimes (in which these data do not lie). The worse performance on CDM is due to the context effects that items from $\overline C$ exert on each other. 
Greedy does fairly well for \disagree under CDM with \yoochoose, but even in this case, \cref{alg:approx} performs significantly better in enough instances for the mean (but not median) performance to be better than Greedy. 
We repeated the experiment with 500 choice sets of size up to 5 sampled from data with similar results 
(\cref{sec:sampled_choice_sets}). We also ran the MIBLP approach for MNL,
which performed as well as \Cref{alg:approx} and was about $12$x faster on \yoochoose and $240$x faster on \allstate with the $\varepsilon$ values we used for \Cref{alg:approx}
(\Cref{sec:MIBLP_results}). 


\begin{figure}[tb]
    \centering
    \includegraphics[width=0.49\columnwidth]{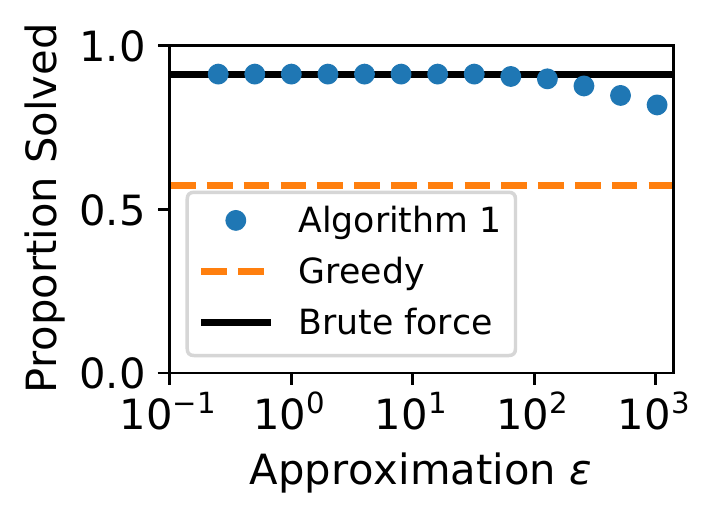}
    \hfill
    \includegraphics[width=0.49\columnwidth]{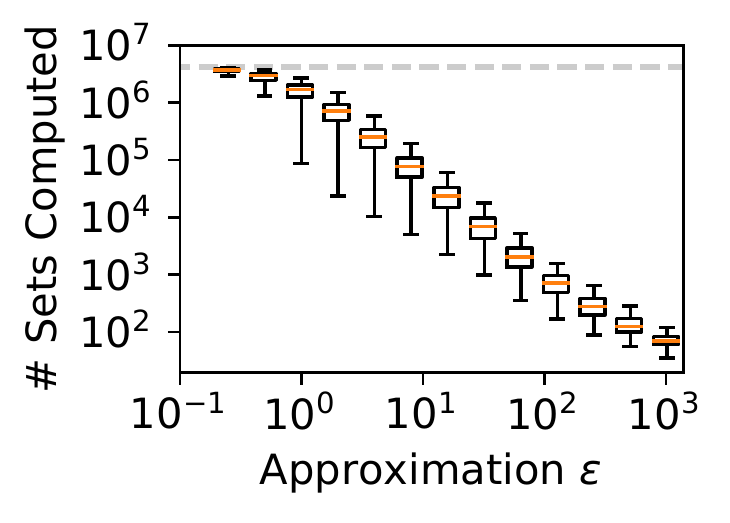}
    \vspace{-5mm}
    \caption{\promo results on \allstate 2-item choice sets. 
    (Left) Success rate comparison; \cref{alg:approx} has near-optimal performance (about $9\%$ of instances have no \promo solution).
    (Right) Number of subsets of $\overline C$ computed by \cref{alg:approx} (dashed gray line at $2^{22} = 2^m$ for brute force computation).}
    \label{fig:promo_allstate}
\end{figure}

\xhdr{\promo} We applied the CDM \promo version of \cref{alg:approx} to \allstate, since this dataset is small enough to compute brute-force solutions.
For each 2-item choice set $C$, we attempted to promote the less-popular item of the pair using brute-force, Greedy, and \cref{alg:approx}. 
\Cref{alg:approx} performed optimally up to $\varepsilon=32$, above which it failed in only 2--26 of 252 feasible instances (\cref{fig:promo_allstate}, left).
(Here, successful promotion means that the item becomes the true favorite among $C$.)
On the other hand, Greedy failed in $37\%$ of the feasible instances.
As in the previous experiment, our algorithm's performance in practice far exceeds the worst-case bounds. 
The number of sets tested by \cref{alg:approx} falls dramatically as $\varepsilon$ increases (\cref{fig:promo_allstate}, right).
With more items (or a smaller range of utilities), the value of $\varepsilon$ required to achieve the same speedup over brute force would be smaller (as with \yoochoose). 
In tandem, these results show that we get near-optimal \promo performance with far fewer computations than brute force. 

\section{Discussion}
Our decisions are influenced by the alternatives that are available, the choice set.
In collective decision-making, altering the choice set can encourage agreement 
or create new conflict.
We formulated this as an algorithmic question: how can we optimize the choice set for some objective?

We showed that choice set optimization is NP-hard for natural objectives under standard choice models;
however, we also found that model restrictions makes promoting a choice easier than encouraging a group to agree or disagree.
We developed approximation algorithms for these hard problems that are effective in practice,
although there remains a gap between theoretical approximation bounds and performance on real-world data. 
Future work could address choice set optimization in interactive group decisions, where group members can communicate their preferences to each other or must collaborate to reach a unified decision. Lastly, \cref{sec:ethics} discusses the ethical considerations of this work.

\section*{Acknowledgments}

This research was supported by
ARO MURI,
ARO Award W911NF19-1-0057,
NSF Award DMS-1830274, and
JP Morgan Chase \& Co.
We thank Johan Ugander for helpful conversations.

\bibliography{references}
\bibliographystyle{icml2020}

\clearpage
\appendix

\section{Hardness proofs}\label{sec:proofs}

\subsection{CDM \promo is hard with $|A|=2, |C|=2$}
\label{sec:cdm_promo_hard_C_2}
In the main text, we show CDM \promo is NP-hard when $|A|=1, |C|=3$ (\cref{thm:cdm_promo_hard}). Here, we provide an additonal proof for the case when $|A|=2, |C|=2$. These are the smallest hard instances of the problem ($|A|=1, |C|=2$ is easy to solve: introduce alternatives that increase utility for $x^*$ for than its competitor).

\begin{theorem}
  In the CDM model, \promo is NP-hard,
  even with just two individuals and two items in $C$.
\end{theorem}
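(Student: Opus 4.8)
The plan is to reduce from \subsetsum, reusing the two-sided ``Goldilocks'' strategy from the \promo hardness proofs for NL and EBA (\cref{thm:nested_logit_promo_hard,thm:eba_promo_hard}) rather than the three-item argument of \cref{thm:cdm_promo_hard}. The conceptual point is that with a single individual and two items \promo is trivial (greedily add every alternative whose pull favors $x^*$ over its competitor), so the hardness must come from making the two individuals impose \emph{conflicting} requirements on how much alternative weight is added. I will arrange for one individual to favor $x^*$ only when the added weight is small enough and the other to favor $x^*$ only when it is large enough, so that both are satisfied at exactly one target value.

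Concretely, given a \subsetsum instance with set $S$ and target $t$, fix $0<\varepsilon<1$ and set $A=\{a,b\}$, $C=\{x^*,y\}$, and $\overline C=S$. Following the bundled notation of \cref{thm:cdm_promo_hard}, I specify the context utilities on $C$ together with the alternative pulls, for every $z\in\overline C$, as
\begin{align*}
 u_a(x^*\mid C) &= t+\varepsilon, \quad u_a(y\mid C)=0, \quad p_a(z,x^*)=0,\; p_a(z,y)=z,\\
 u_b(x^*\mid C) &= \varepsilon, \quad u_b(y\mid C)=t, \quad p_b(z,x^*)=z,\; p_b(z,y)=0.
\end{align*}
Because $|C|=2$ and the softmax is monotone, $x^*$ is an individual's favorite exactly when its context-dependent utility exceeds that of $y$, so I only compare context utilities instead of full probabilities.

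Writing $s_Z=\sum_{z\in Z}z$, individual $a$ favors $x^*$ iff $t+\varepsilon>s_Z$, i.e.\ iff $s_Z\le t$ (by integrality of the $z$ and $\varepsilon<1$), while individual $b$ favors $x^*$ iff $\varepsilon+s_Z>t$, i.e.\ iff $s_Z\ge t$. Hence both favor $x^*$ simultaneously iff $s_Z=t$, and at most one does otherwise. The maximum \promo value is therefore $2$ precisely when $S$ contains a subset summing to $t$ and at most $1$ otherwise, so an efficient \promo solver would decide \subsetsum.

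The construction is short, so the effort lies in the verification. The main things to be careful about are: (i) confirming that the favorite comparison really reduces to the context-utility comparison, which is clean here because the favorite ranks only items of $C$, so the alternatives in $Z$ need no separate treatment; (ii) checking that the $\varepsilon$-slack combined with integrality collapses the two one-sided inequalities into the single equality $s_Z=t$, with both inequalities strict at $s_Z=t$ as the \promo definition requires; and (iii) noting that the bundled utilities $u_\cdot(\cdot\mid C)$ are realizable by some assignment of base item utilities and intra-$C$ pulls, since those are free parameters. I expect step (ii)---getting the direction of the two pulls and the placement of the $\varepsilon$ offsets right so that the feasible window closes to a single integer---to be the only place where a sign error could creep in.
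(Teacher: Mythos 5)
Your proposal is correct and is essentially the paper's own proof: the same reduction from \subsetsum with $A=\{a,b\}$, $C=\{x^*,y\}$, the same utility offsets ($t+\varepsilon$, $0$ for $a$ and $\varepsilon$, $t$ for $b$), the same one-sided pulls ($a$'s alternatives pull only on $y$, $b$'s only on $x^*$), and the same integrality-plus-$\varepsilon$ argument forcing $s_Z=t$. The only cosmetic differences are that you bundle the parameters as context utilities $u(\cdot\mid C)$ and observe that the alternatives' own utilities are irrelevant, whereas the paper specifies base utilities and sets $u_a(z)=u_b(z)=-\infty$.
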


\begin{proof}
By reduction from \textsc{Subset Sum}. Let $S, t$ be an instance of \textsc{Subset Sum}. Let $A=\{a, b\}$, $C = \{x, y\}$, $\overline C=S$. Using tuples interpreted entrywise, construct a CDM with the following parameters.
\begin{align*}
  u_{a}(\langle x^*, y \rangle) &=  \langle t+\varepsilon,0  \rangle \\
  u_{b}(\langle x^*, y \rangle) &=  \langle \varepsilon, t  \rangle \\
  u_a(z) &= u_b(z) = -\infty &&\forall z \in \overline C\\
  p_{a}(z, \langle x^*, y\rangle ) &=  \langle 0, z \rangle  &&\forall z \in \overline C\\
  p_{b}(z, \langle x^*, y\rangle ) &=  \langle z, 0 \rangle  &&\forall z \in \overline C
\end{align*}
To promote $x^*$, we need to add more than $t-\varepsilon$ to $b$'s utility for $x^*$, but add less than $t+\varepsilon$ to $a$'s utility for $x^*$. Since all pulls are integral, the only solution is a set $Z$ whose sum of pulls is $t$. If we could efficiently find such a set, then we could efficiently find the \subsetsum solution. 
\end{proof}

\section{Approximation algorithm extensions}\label{sec:alg_details}

\subsection{Adapting \Cref{alg:approx} for CDM  with guarantees for special cases}\label{sec:app_approx_cdm}
We can adapt \cref{alg:approx} for the CDM model, but we only maintain the approximation
error bounds under special cases of the structure of the ``pulls''. Still, we can use this algorithm
as a principled heuristic and it tends to work well in practice, as we saw in \cref{fig:all_pairs}.

As a first step, we use the alternative parametrization of the model used by \citet[Eq.~(1)]{seshadri2019discovering}, which has fewer parameters.
In this description of the model, utilities and context effects are merged into a single utility-adjusted pull $q_a(z, x) = p_a(z, x) - u_a(x)$, with the special case $q_a(x, x)= 0$. We then have 
 \begin{equation}\label{eq:cdm_alt_param}
   \Pr(a \gets x \mid C) = \frac{\exp({\sum_{w \in C} q_a(w, x)})}{\sum_{y \in C} \exp({\sum_{z \in C} q_a(z, y)})}.
 \end{equation}
 Refer to \citet[Appendix C.1]{seshadri2019discovering} for details of the equivalence between this formulation and the one we use in the main text.

 Matching the notation of the proof of \cref{thm:approx_alg}, we use the shorthand $e_{ax} = \exp({\sum_{w \in C} q_a(w, x)})$.

 To adapt \Cref{alg:approx} to the CDM, we expand $L_i$ to have $nk$ dimensions for each individual-item pair, increasing the runtime to $O((m+kn^2)(1+\lfloor \log_{1+\delta} s \rfloor)^{nk})$. This is only practical if $nk$ is small, but as we have seen, \agree, \disagree, and \promo are all NP-hard even with $n=2$ and $k=2$ or $3$.
 Each individual-item dimension stores $e_{ax}$, the total exp-utility of that item to that individual given that we have included some set of alternatives. When we include an additional item from $\overline C$, we place the new sets in $L_i$ with updated $e_{ax}$ values.

 This only preserves the $\varepsilon$-additive approximation if alternatives (items in $\overline C$) have zero context effects on each other;
 however, they may still have context effects on items in $C$. Formally, we need $q_a(z, z') = 0$ for all $z, z' \in \overline C$ and $a\in A$.
 Although this is a serious restriction, it leaves \agree, \disagree, and \promo NP-hard, as the CDM we used in our proofs had this form
 (see also \Cref{sec:approx_promo} for how to apply \Cref{alg:approx} to \promo).
 If this version of the algorithm is applied to a general CDM, it may experience higher error.
 Nonetheless, our real-data experiments show it to be a good heuristic.

For the following analysis, we assume a CDM with zero context effects between items in $\overline C$. We need to verify that if every item's exp-utility is approximated to within factor $(1+\beta)^{\pm 1}$, then the total disagreement of a set is approximated to within $\varepsilon$ as we had in the MNL case. The approximation error guarantee increases to $4\varepsilon$ in the restricted CDM version---to recover the $\varepsilon$-additive approximation, we can make $\delta$ smaller by a factor of 4 (that is, we could pick $\delta = \varepsilon / (8 k m \binom n 2)$; we instead keep the old $\delta$ for simplicity in the following analysis).

Recall that $Z'$ is the representative in $L_m$ of the optimal set of alternatives $Z^*$. For compactness, we define $T_a$ to be the denominator of \cref{eq:cdm_alt_param}, with $T_a'$ and $T_a^*$ referring to those denominators under the choice sets $C\cup Z'$ and $C\cup Z^*$, respectively. This is where we require zero context effects between alternatives: if alternatives interact, then storing every $e_{ax}$ in the table (from which we can compute $T_a$) is not enough to determine updated choice probabilities when we add a new alternative.

 The difference in the analysis begins when we bound $\Pr(a\gets x \mid C\cup Z')$ on both sides using the fact that each exp-utility sum is approximated within a $1+\beta$ factor (so the probability denomiators $T_a$ are also approximated within this factor): 
\begin{align*}
  \frac{\frac{e_{ax}^*}{1+\beta}}{T_a^* (1+\beta)} &= \frac{1}{(1+\beta)^2}\frac{e_{ax}^*}{T_a^*}\\
  &< \frac{e_{ax}'}{T_a'} = \Pr(a\gets x \mid C\cup Z') \\
  &<\frac{e_{ax}^*(1+\beta)}{ \frac{T_a^*}{1+\beta}}=(1+\beta)^2\frac{e_{ax}^*}{T_a^*}.
\end{align*}
Based on the lower bound, the difference between $\Pr(a\gets x \mid C\cup Z^*)$ and $\Pr(a\gets x \mid C\cup Z')$ could be as large as
\begin{align*}
  \frac{e_{ax}^*}{T_a^*} - \frac{1}{(1+\beta)^2}\frac{e_{ax}^*}{T_a^*} &\le 1-\frac{1}{(1+\beta)^2}.
\end{align*}
Now considering the upper bound, the difference between $\Pr(a\gets x \mid C\cup Z^*)$ and $\Pr(a\gets x \mid C\cup Z')$ could be as large as
\begin{align*}
   (1+\beta)^2\frac{e_{ax}^*}{T_a^*} - \frac{e_{ax}^*}{T_a^*} &\le (1+\beta)^2 - 1.
\end{align*}
Therefore, $|\Pr(a\gets x \mid C\cup Z') - \Pr(b\gets x \mid C\cup Z')|$ can only exceed $|\Pr(a\gets x \mid C\cup Z^*) - \Pr(b\gets x \mid C\cup Z^*)|$ by at most $1-\frac{1}{(1+\beta)^2} + (1+\beta)^2 - 1 = (1+\beta)^2 - \frac{1}{(1+\beta)^2}$.
This is at most $4\beta$:
\begin{align*}
  4 \beta - (1+\beta)^2 + \frac{1}{(1+\beta)^2} &=  \frac{\beta^2(2-\beta^2)}{(1+\beta)^2}\\
  &> 0. \tag{for $0 < \beta < \sqrt{2}$}
\end{align*}

So $D(Z')$ and $D(Z^*)$ are within $4\beta\binom{n}{2}k = 4\varepsilon$.

\subsection{Adapting \Cref{alg:approx} for NL with full guarantees}\label{sec:nl_approx_agree}
We can also adapt \cref{alg:approx} for the NL model, and unlike the CDM,
the $\varepsilon$-additive approximation holds in all parameter regimes.
Recall that the NL tree has two types of leaves: choice set items and alternative items.
Let $P_a$ be the set of internal nodes of individual $a$'s tree that have at least one alternative item as a child and let $p = \max_{a\in A} |P_a|$.
If we know the total exp-utility that alternatives contribute as children of each $v\in P_a$, then we can compute $a$'s choice probabilites over items in $C$ in polynomial time. 

With this in mind, we modify \cref{alg:approx} by having dimensions in $L$ for each individual for each of their nodes in $P_a$.
This results in $\le np$ dimensions. The algorithm then keeps track of the exp-utility sums from alternatives under each node in $P_a$ for each individual.
The exponent in the runtime increases to (at most) $np$, but this remains tractable for some hard instances, such as those in our hardness proofs. In some cases, we can dramatically improve the runtime of the algorithm: if the subtree under an internal node contains only alternatives as leaves in an individuals's tree, then we only need one dimension $L$ for that individual's entire subtree, and it has only two cells: one for sets that contain at least one alternative in that subtree, and one for sets that do not. The only factor that affects the choice probabilities of items in $C$ is whether that subtree is ``active'' and its root can be chosen. 

We now show how the error from exp-utility sums of alternatives propagates to choice probabilities.
In the NL model, $\Pr(a\gets x \mid C)$ is the product of probabilities that $a$ chooses each ancestor of $x$ as $a$ descends down its tree.
Let $v_1, \dots, v_\ell$ be the nodes in $a$'s tree along the path from the root to $x$. For compactness, we use $\Pr(x,Z)$ instead of $\Pr(a\gets x \mid C \cup Z)$ in the following analysis.

Pick $\delta \le ([\varepsilon/(2k \binom n 2)+1]^{1/\ell}-1)/m$ and recall that $\beta = 2m\delta$.
We can use the same analysis as in the proof of \cref{thm:approx_alg} to find that for any set $Z^*\subseteq \overline C$, there exists some $Z'\in L$ such that
\begin{align*}
  \Pr(x, Z^*) &= \Pr(v_1,Z^*)\cdot \dots \cdot \Pr(v_x,Z^*)\\
  &< \Big(\Pr(v_1, Z') + \frac{\beta}{2}\Big) \cdot \dots \cdot \Big(\Pr(v_x, Z')+ \frac{\beta}{2}\Big)\\
  &\le \Pr(x, Z') +\Big(1+\frac{\beta}{2}\Big)^\ell - 1\\
  &\le \Pr(x, Z') + \frac{\varepsilon}{2k\binom n 2}.
\end{align*}

 Now for the lower bound, pick $\delta \le (1-[1-\varepsilon/(2k \binom n 2)]^{1/\ell})/m$. Again from the proof of \cref{thm:approx_alg}:
\begin{align*}
  \Pr(x, Z^*) &= \Pr( v_1, Z^*)\cdot \dots \cdot \Pr( v_x, Z^*)\\
  &> \Big(\Pr(v_1, Z') - \frac{\beta}{2}\Big) \cdot \dots \cdot \Big(\Pr( v_x, Z')- \frac{\beta}{2}\Big)\\
  &\ge \Pr( x,Z' ) +\Big(1-\frac{\beta}{2}\Big)^\ell - 1\\
  &\ge \Pr( x,Z' ) - \frac{\varepsilon}{2k\binom n 2}.
\end{align*}
 Let $h$ be the maximum height of any indivdual's NL tree (so $\ell \le h$). Then, by picking $\delta = \min \{[\varepsilon/(2k \binom n 2)+1]^{1/h}-1, 1-[1-\varepsilon/(2k \binom n 2)]^{1/h}\}/m$, we find that $\Pr(a\gets x \mid C \cup Z^*)$ and $\Pr(a\gets x \mid C \cup Z')$ differ by less than $\varepsilon/(k\binom n 2)$ for all $x\in C$ and $a\in A$, meaning that the total disagreement between $a$ and $b$ cannot differ by more than $\varepsilon$ as before. 

Unfortunately, this means we need to make $\delta$ exponentially (in $h$) smaller in the NL model. Put another way, our error bound gets exponentially worse as $h$ increases if we keep $\delta$ constant. However, we have seen that there are NP-hard families of NL instances in which $h$ is a small constant (e.g., $h = 2$ in our hardness proof), so once again this algorithm is an exponential improvement over brute force. Moreover, the error bound here is often far from tight, since we use the very loose bounds $\Pr(v_i, Z') \le 1$ in the analysis. This means the algorithm will tend to outperform the worst-case guarantee by a significant margin.

\subsection{Adapting \Cref{alg:approx} for \promo}\label{sec:approx_promo}

%
%

\subsubsection{CDM \promo with special case guarantees}
\Cref{alg:approx} can be applied to \promo in the (restricted) CDM model with only a small modification to the CDM version described in \Cref{sec:app_approx_cdm}:
at the end of the algorithm, we return the set that results in the maximum number of individuals having $x^*$ as an $\varepsilon$-favorite item.
Additionally, we choose $\delta = \varepsilon / (10m)$ (we don't need the factors $\binom n 2$ or $k$ since we aren't optimizing $D(Z)$).

Following the analysis in \Cref{sec:app_approx_cdm} (with $\beta = 2m\delta = \varepsilon/5$), we find that $\Pr(a\gets x \mid C\cup Z^*)$ and $\Pr(a\gets x \mid C\cup Z')$ differ by at most $\max\{1-\frac{1}{(1+\varepsilon/5)^2}, (1+\varepsilon/5)^2 - 1\}$ for all $x$.
On the interval $[0, 1]$, this is bounded by $\varepsilon/2$.
Thus, if $x^*$ is the favorite item for $a$ given the optimal choice set $C\cup Z^*$, then it must be an $\varepsilon$-favorite of individual $a$ given $C\cup Z'$ (as always, $Z'$ is the representative of $Z^*$ in $L_m$).
This is because when we go from $C\cup Z^*$ to $C\cup Z'$, the choice probability of $x^*$ can shrink by at most $\varepsilon/2$ and the choice probability for any other item can grow by at most $\varepsilon/2$. Thus, including $Z'$ makes at least as many individuals have $x^*$ as an \emph{$\varepsilon$-favorite} item as including $Z^*$ makes have $x^*$ as a \emph{favorite} item.

This is exactly what it means for \Cref{alg:approx} to $\varepsilon$-approximate \promo in the CDM (when items in $\overline C$ do not exert context effects on each other). Moreover, not having to compute $D(Z)$ makes the runtime of \Cref{alg:approx} $O(m (1+\lfloor \log_{1+\delta} s \rfloor)^{nk})$ when applied to \promo in the CDM. In the general CDM, this algorithm is only a heuristic.

\subsubsection{NL \promo with full guarantees}
A very similar idea allows us to apply the NL version of \Cref{alg:approx} from \Cref{sec:nl_approx_agree} to \promo and retain an approximation guarantee. As before, use the NL version and return the set that results in the maximum number of individuals having $x^*$ as an $\varepsilon$-favorite item. However, we instead use $\delta = \min \{(\varepsilon/4+1)^{1/h}-1, 1-(1-\varepsilon/4)^{1/h}\}/m$, which by the analysis in \Cref{sec:nl_approx_agree} results in $\Pr(a\gets x \mid C\cup Z^*)$ and $\Pr(a\gets x \mid C\cup Z')$ differing by at most $\varepsilon/2$. As in the CDM case, this guarantees that if $x^*$ is the favorite item for $a$ given the optimal choice set $C\cup Z^*$, then it must be an $\varepsilon$-favorite of $a$ given $C\cup Z'$. Therefore this version of \Cref{alg:approx} $\varepsilon$-approximates \promo in the NL model with runtime $O(m (1+\lfloor \log_{1+\delta} s \rfloor)^{np})$.

\section{Additional experiment details}
\subsection{Simple example of poor performance for Greedy}
As we saw in experimental data, Greedy can perform poorly even in small instances of \agree.
Below we provide an MNL instance with $n=m=k=2$ for which the error of the greedy solution is approximately 1.
With only two individuals, $0 \le D(Z) \le 2$, so an error of 1 is very large.

In the bad instance for greedy, $A=\{a, b\}$, $C= \{x, y\}$, $\overline C = \{p, q\}$, and the utilities are as follows. 
\begin{equation*}
  \begin{aligned}[c]
    u_a(x) &= 8\\
    u_a(y) &= 2\\
    u_a(p) &=10\\
    u_a(q) &= 0
  \end{aligned}
  \qquad
  \begin{aligned}[c]
  u_b(x) &= 8\\
    u_b(y) &= 8\\
    u_b(p) &=0\\
    u_b(q) &= 15
  \end{aligned}
\end{equation*}
In this instance of \agree, the greedy solution is $D(\emptyset) \approx 0.9951$ (including either $p$ or $q$ alone increases disagreement), while the optimal solution is $D(\{p, q\}) \approx 0.0009$.

\subsection{All-pairs agreement results for MIBLP}
\label{sec:MIBLP_results}
\Cref{fig:all_pairs_MIBLP} shows the comparison in performance between \Cref{alg:approx} and the MIBLP approach for the all-pairs \agree and \disagree experiment. The methods perform nearly identically on both \allstate and \yoochoose. The MIBLP approach performs marginally better in some cases of \yoochoose \agree. As noted in the paper, the MIBLP heuristic is considerably faster for the values of $\varepsilon$ we used ($12$x and $240$x on \yoochoose and \allstate, respectively; speed differences vary significantly depending on $\varepsilon$), but provides no a priori performance guarantee and cannot be applied to CDM or NL. Nonetheless, we can see that it performs very competitively and would be a good approach to use in practice for MNL \agree and \disagree. 
\begin{figure}[h]
    \centering
    \includegraphics[width=0.7\columnwidth]{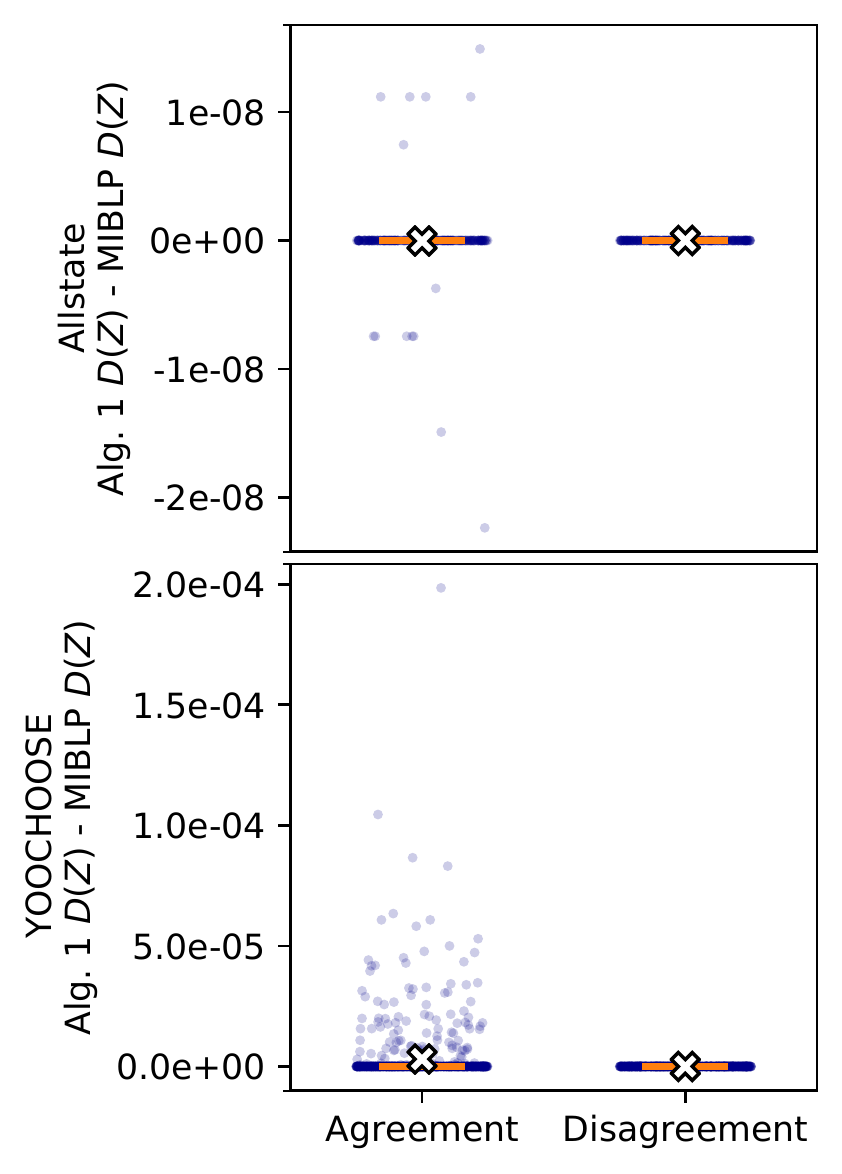}
    \caption{MIBLP vs.~\Cref{alg:approx} performance box plots
    when applied to all 2-item choice sets in \allstate and \yoochoose under MNL. Each point is the difference in $D(Z)$ when MIBLP and \Cref{alg:approx} are run on a choice set, and Xs mark means. }
    \label{fig:all_pairs_MIBLP}
\end{figure}

\begin{figure}[h!]
    \centering
    \includegraphics[width=\columnwidth]{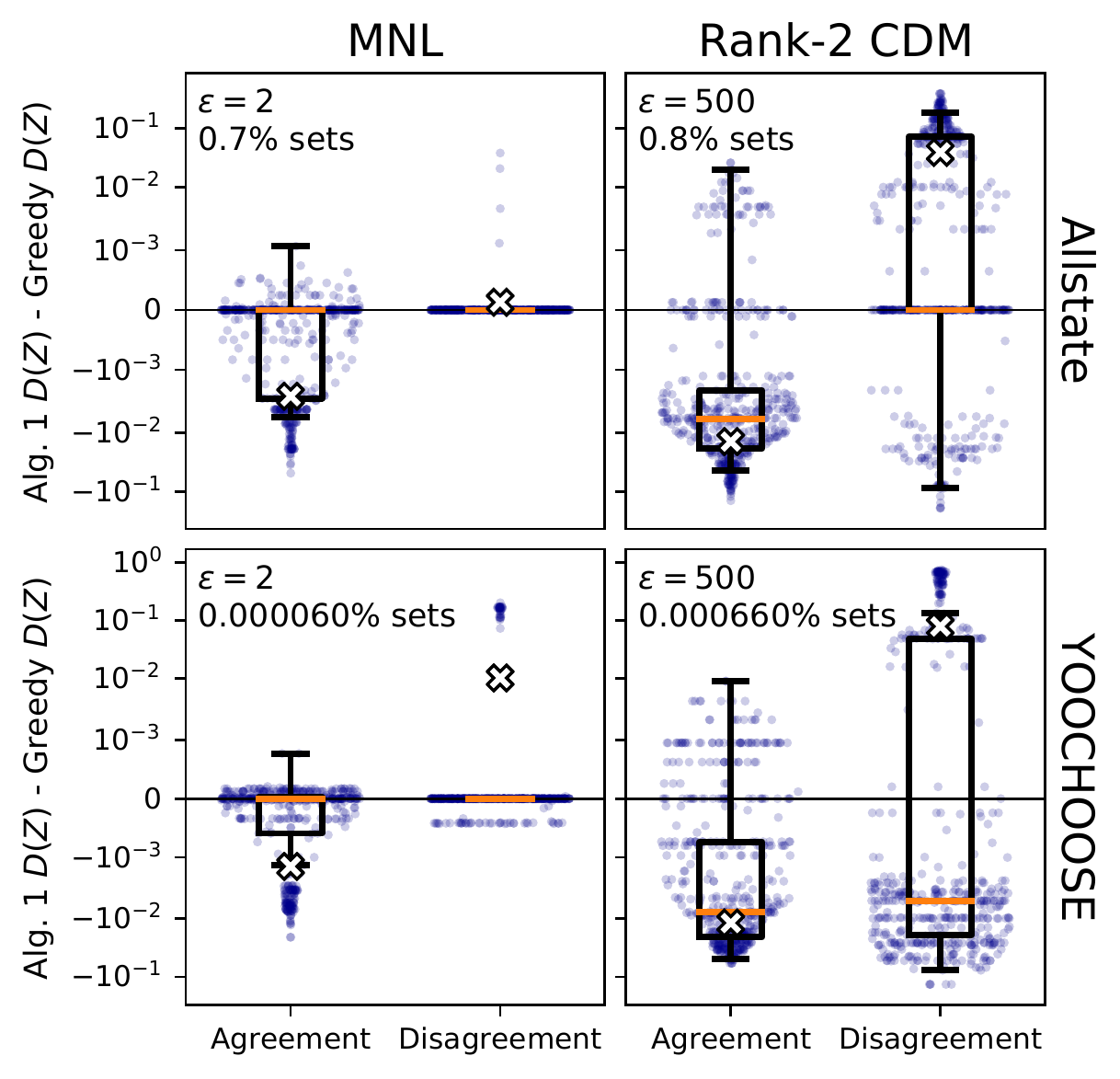}
    \caption{Results of the agreement experiment with 500 choice sets sampled uniformly from each dataset. Compare with \cref{fig:all_pairs} in the main text. Again, \Cref{alg:approx} has better mean performance in all cases. The larger values of $\varepsilon$ result in slightly worse performance on the margins than in \cref{fig:all_pairs}, but also fewer sets computed.}
    \label{fig:sampled_choice_sets}
\end{figure}

\subsection{Choice sets sampled from data}
\label{sec:sampled_choice_sets}
We repeated the all-pairs agreement experiment with 500 choice sets of size up to 5 sampled uniformly from each dataset, allowing us to evaluate the performance of \Cref{alg:approx} on realistic choice sets. We limited the size of sampled choice sets since the CDM version of \Cref{alg:approx} scales poorly with $|C|$ (see \cref{sec:app_approx_cdm}). For this version of the experiment, we fixed larger values of $\varepsilon$ (2 for MNL, 500 for CDM) to handle larger choice sets and to keep running time down. Again, \Cref{alg:approx} has better mean performance in every case (\cref{fig:sampled_choice_sets}), showing that it performs well on real choice sets.

\section{A note on ethical considerations}
\label{sec:ethics}

Influencing the preferences of decision-makers has the potential for malicious applications, so it is important to address the ethical context of this work.

Any problem with positive social applications (e.g., \agree: encouraging consensus, \promo: promoting environmentally-friendly transportation options, \disagree: increasing diversity of opinions) has the potential to be used for ill. This should not prevent us from seeking methods to acheive these positive ends, but we should certainly be cognizant of the possibility of unintended applications. In a different vein, understanding when a group is susceptible to undesired interventions (or detecting such interventions) makes problems like \disagree worth studying from an adversarial perspective. Along these lines, our hardness results are encouraging since optimal malicious interventions are difficult. 

Finally, we note that all of the theoretical problems we study presuppose access to choice data from which preferences can be learned and the ability to influence choice sets. Any entity which has both of these (such as an online retailer, a government deciding transportation policy, etc.) already has significant power to influence choosers. If such an entity had malicious intent, then near-optimal \disagree solutions would be the least of our concerns. 

To summarize, these problems are worth studying because of (1) their purely theoretical value in furthering the field of discrete choice, (2) their potential for positive applications, (3) insight into the potential for harmful manipulation by an adversary, and (4) the minimal additional risk from undesired use of our methods.

\end{document}